\documentclass{article}






\usepackage{times}
\usepackage[table,usenames,dvipsnames]{xcolor}
\usepackage[T1]{fontenc}
\usepackage[utf8]{inputenc}
\usepackage{authblk}
\usepackage{graphicx} 
\usepackage{listings} 
\usepackage{mathtools}
\usepackage{amssymb,amsfonts}
\usepackage{todonotes}
\usepackage[normalem]{ulem}
\usepackage[linesnumbered,ruled,vlined]{algorithm2e}
\usepackage{pgf,tikz}

\newenvironment{proof} {\noindent\emph{Proof:}}{$\left.\right.$\hfill$\square$\medskip} 

\usepackage[bibliography=common,appendix=append]{apxproof}
\usetikzlibrary{arrows}

\newtheoremrep{proposition}{Proposition}
\newtheorem{definition}{Definition}
\newtheorem{example}{Example}
\newtheorem{corollary}{Corollary}
\newtheorem{lemma}{Lemma}




\newcommand{\fun}[1]{\ensuremath{\textsl{#1}}} 


\newcommand{\body}[1]{\ensuremath{\mathrm{body}(#1)}}
\newcommand{\head}[1]{\ensuremath{\mathrm{head}(#1)}}
\newcommand{\erule}{\ensuremath{\sigma}}
\newcommand{\fr}[1]{\ensuremath{\mathrm{fr}(#1)}}
\newcommand{\ruleset}{\ensuremath{\Sigma}}
\newcommand{\instance}{\ensuremath{I}}

\newcommand{\matchsafe}{\ensuremath{\match^s}}
\newcommand{\kb}{\ensuremath{\mathcal{K}}}
\newcommand{\kblong}{\ensuremath{(\instance,\ruleset)}}
\newcommand{\predicate}{\ensuremath{r}}

\newcommand{\set}{\ensuremath{S}}

\newcommand{\vars}[1]{\fun{vars}{(#1)}}
\newcommand{\terms}[1]{\fun{terms}{(#1)}}
\newcommand{\atoms}[1]{\fun{atom}{(#1)}}	
\newcommand{\treeatoms}[1]{\fun{atoms}(#1)}
\newcommand{\generated}[1]{\fun{generated}{(#1)}}
\newcommand{\shared}[1]{\fun{shared}{(#1)}}
\newcommand{\term}{\ensuremath{t}}

\newcommand{\type}[1]{\fun{type}{(#1)}}


\newcommand{\derivation}{\ensuremath{S}}
\newcommand{\chase}[2]{\fun{chase}(#1,#2)}
\newcommand{\atom}{\alpha}
\newcommand{\vect}[1]{\mathbf{#1}}
\newcommand{\match}{\pi}
\newcommand{\bag}{\ensuremath{B}}

\newcommand{\stequiv}{\ensuremath{\equiv_{st}}}
\newcommand{\derivationtree}{\ensuremath{\mathcal{T}}}
\newcommand{\entailmenttree}{\ensuremath{\mathcal{T}}}

\title{A Single Approach to Decide Chase Termination on Linear Existential Rules}

\author[1]{Michel Leclère}

\author[1]{Marie-Laure Mugnier}

\author[2]{Micha\"el Thomazo}

\author[1]{Federico Ulliana}

\affil[1]{University of Montpellier, CNRS, Inria, LIRMM, France}
\affil[2]{Inria, DI ENS, ENS, CNRS, PSL University, France}

\begin{document}

\maketitle

\begin{abstract}
Existential rules, long known as tuple-generating dependencies in database theory, have been intensively studied in the last decade as a powerful formalism to represent ontological knowledge in the context of ontology-based query answering.
A knowledge base is then composed of an instance that contains incomplete data and a set of existential rules, and answers to queries are logically entailed from the knowledge base. This brought again to light the fundamental chase tool, and its different variants that have been proposed in the literature. It is well-known that the problem of determining, given a chase variant and a set of existential rules, whether the chase will halt on any instance, is undecidable. Hence, a crucial issue is whether it becomes decidable for known subclasses of existential rules. In this work, we consider linear existential rules, a simple yet important subclass of existential rules that generalizes inclusion dependencies. We show the decidability of the \emph{all instance} chase termination problem on linear rules for three main chase variants, namely \emph{semi-oblivious}, \emph{restricted} and \emph{core} chase.  To obtain these results, we introduce a novel approach based on so-called derivation trees and a single notion of forbidden pattern. Besides the theoretical interest of a unified approach and new proofs, we provide the first positive decidability results concerning the termination of the restricted chase, proving that chase termination on linear existential rules is decidable for both versions of the problem: Does \emph{every} fair chase sequence terminate? Does \emph{some} fair chase sequence terminate? 

\end{abstract}

\section{Introduction}

The chase procedure is a fundamental tool for solving many issues involving tuple-generating dependencies, such as data integration \cite{DBLP:conf/pods/Lenzerini02}, data-exchange \cite{DBLP:journals/tcs/FaginKMP05}, query answering using views \cite{DBLP:journals/vldb/Halevy01} or query answering on probabilistic databases \cite{DBLP:conf/icde/OlteanuHK09}. In the last decade, tuple-generating dependencies raised a renewed interest under the name of \emph{existential rules} for the problem known as ontology-based query answering. In this context, the aim is to query a knowledge base $(\instance, \ruleset)$, where $\instance$ is an instance and $\ruleset$ is a set of existential rules (see e.g. the survey chapters \cite{DBLP:books/sp/virgilio09/CaliGL09,DBLP:conf/rweb/MugnierT14}). In more classical database terms, this problem can be recast as querying an instance $\instance$ under incomplete data assumption, provided with a set of constraints $\ruleset$, which are tuple-generating dependencies. The chase is a fundamental tool to solve dependency-related problems as it allows one to compute a (possibly infinite) \emph{universal model} of $(\instance, \ruleset)$, \emph{i.e.}, a model that can be homomorphically mapped to any other model of $(\instance, \ruleset)$. Hence, the answers to a conjunctive query (and more generally to any kind of query closed by homomorphism) over $(\instance, \ruleset)$ can be defined by considering solely this universal model. 
  
Several variants of the chase have been introduced, and we focus in this paper on the main ones: semi-oblivious  \cite{DBLP:conf/pods/Marnette09} (aka skolem \cite{DBLP:conf/pods/Marnette09}), restricted  \cite{DBLP:conf/icalp/BeeriV81,DBLP:journals/tcs/FaginKMP05} (aka standard \cite{phd/Onet12}) and core  \cite{DBLP:conf/pods/DeutschNR08}. It is well known that all of these produce homomorphically equivalent results but terminate for increasingly larger subclasses of existential rules. 

Any chase variant starts from an instance and
exhaustively performs a sequence of rule applications according to a redundancy criterion which characterizes the variant itself. 
The question of whether a chase variant terminates on \emph{all instances} for a given set of existential rules is known to be undecidable when there is no restriction on the kind of rules  \cite{DBLP:journals/ai/BagetLMS11,DBLP:conf/icalp/GogaczM14}. A number of  \emph{sufficient} syntactic conditions for termination have been proposed in the literature for the  semi-oblivious chase (see e.g. \cite{phd/Onet12,DBLP:journals/jair/GrauHKKMMW13,DBLP:phd/hal/Rocher16} for syntheses), as well as for the restricted chase \cite{DBLP:conf/ijcai/CarralDK17} (note that the latter paper also defines a sufficient condition for non-termination).  However, only few positive results exist regarding the termination of the chase on specific classes of rules. Decidability was shown for the semi-oblivious chase on guarded-based rules (linear rules, and their extension to (weakly-)guarded rules) \cite{DBLP:conf/pods/CalauttiGP15}. Decidability of the core chase termination on guarded rules for a fixed instance was shown in \cite{DBLP:conf/icdt/Hernich12}.  

In this work, we provide new insights on the chase termination problem for \emph{linear} existential rules, a simple yet important subclass of guarded existential rules, which generalizes inclusion dependencies
\cite{DBLP:journals/tods/Fagin81} and practical ontological languages   \cite{DBLP:journals/ws/CaliGL12}. Precisely, the question of whether a chase variant terminates on all instances for a set of linear existential rules is studied in two fashions: 
\begin{itemize}
\item does \emph{every} (fair)  chase sequence terminate? 
\item  does \emph{some} (fair) chase sequence  terminate? 
\end{itemize}
It is well-known that these two questions have the same answer for the semi-oblivious and the core chase variants, but not for the restricted chase. Indeed, this last one  may admit both terminating and non-terminating sequences over the same knowledge base.
We show that the termination problem is decidable for linear existential rules, 
whether we consider any version of the problem and any chase variant.

We study chase termination by exploiting in a novel way a graph structure, 
namely the \emph{derivation tree}, which was originally introduced to solve the ontology-based (conjunctive) query answering problem for the family of greedy-bounded treewidth sets of existential rules \cite{DBLP:conf/ijcai/BagetMRT11,DBLP:phd/hal/Thomazo13}, a class that generalizes guarded-based rules and  in particular linear  rules. 
We first use derivation trees to show the decidability of the termination problem for the semi-oblivious and restricted chase variants, and then  generalize them to \emph{entailment trees} to show the decidability of  termination for the core chase. For any chase variant we consider, we adopt the same high-level procedure: starting from a finite set of canonical instances (representative of all possible instances), we build a (set of) tree structures for each canonical instance, while forbidding the occurrence of a specific pattern, 
we call \emph{unbounded-path witness}. 
The built structures are finite thanks to this forbidden pattern,
and this allows us to decide if the chase terminates on the associated canonical instance. 
By doing so, we obtain a uniform approach to study the termination of several chase variants, that we believe to be of theoretical interest per se.  The derivation tree is moreover a simple structure and the algorithms built on it are likely to lead to an effective implementation. Let us also point out that our approach is constructive: if the chase terminates on a given instance, the algorithm that decides termination actually computes the result of the chase (or a superset of it in the case of the core chase), otherwise it pinpoints a forbidden pattern responsible for non-termination. 

Besides providing new theoretical tools to study chase termination, we obtain the following results for linear existential rules:
\begin{itemize}
\item a new proof of the decidability of the semi-oblivious chase termination, building on different objects than the previous proof provided in \cite{DBLP:conf/pods/CalauttiGP15}; we show that our algorithm provides the same complexity upper-bound;
\item the decidability of the restricted chase termination, for both versions of the problem, i.e., termination of all (fair) chase sequences and termination of some (fair) chase sequence; to the best of our knowledge, these are the first positive results on the decidability of the restricted chase termination;  
\item a new proof of the decidability of the core chase termination, with different objects than previous work reported in \cite{DBLP:conf/icdt/Hernich12}; although this latter paper solves the question of the core chase termination given a \emph{single} instance, the results actually allow to infer the decidability of the \emph{all} instance version of the problem, by noticing that only a finite number of  instances need to be considered (see the next section). 
\end{itemize}

The paper is organized as follows. After introducing some preliminary notions (Section 2), we define the main components of our framework, namely derivation trees and unbounded-path witnesses (Section 3). 
We build on these objects to prove the decidability of the semi-oblivious and restricted chase termination (Section 4). Finally, we generalize derivation-trees to  entailment trees and use them to prove the decidability of the core chase termination (Section 5).  Detailed proofs are provided in the appendix. 


\section{Preliminaries}
\label{section-preliminaries}
We consider a logical \emph{vocabulary} composed of a finite set of predicates and an infinite set of constants.
An \emph{atom} $\atom$ has the form $\predicate(t_1,\ldots, t_n)$ where $\predicate$ is a predicate of arity $n$ and the $t_i$ are terms (i.e., variables or constants).  
We denote by $\terms{\atom}$ 
(resp. $\vars{\atom}$) the set of terms 
(resp. variables) in $\atom$ and extend the notations to a set of atoms. A \emph{ground} atom does not contain any variable. 
 It is convenient to identify the existential closure of a conjunction of atoms with the set of these atoms. 
An \emph{instance} is a set of (non-necessarily ground) atoms, which is finite unless otherwise specified. 
Abusing terminology, we will often see an instance as its isomorphic model.

Given two sets of atoms $\set$ and $\set'$,  a \emph{homomorphism} from $\set'$ to $\set$ is a substitution $\match$ of $\vars{\set'}$ by $\terms{\set}$ such that $\match(\set') \subseteq \set$. It holds that $\set \models \set'$ (where $\models$ denotes classical logical entailment) iff there is a homomorphism from $\set'$ to $\set$. An endomorphism of $\set$ is a homomorphism from $\set$ to itself. 
A set of atoms is a \emph{core} if it admits only injective endomorphisms. Any finite set of atoms is logically equivalent to one of its subsets that is a core, and this core is unique up to isomomorphism (i.e., bijective variable renaming). 
Given sets of atoms $\set$ and $\set'$ such that $\set \cap \set' \neq \emptyset$, we say that $\set$ \emph{folds} onto $\set'$ if there is a homomorphism $\match$ from $\set$ to $\set'$ such that $\match$ is the identity on $\set \cap \set'$. The homomorphism $\match$ is called a \emph{folding}. In particular, it is well-known that any set of atoms \emph{folds} onto its core.

An existential rule (or simply \emph{rule}) is of  the form 
$\erule = \forall\vect{x}\forall\vect{y}.[\body{\vect{x},\vect{y}} \rightarrow \exists \vect{z}.\head{\vect{x},\vect{z}}]$
where $\body{\vect{x},\vect{y}}$ and $\head{\vect{x},\vect{z}}$ are non-empty conjunctions of atoms on variables, respectively called the \emph{body} and  the \emph{head} of the rule, also denoted by 
$\body{\erule}$ and $\head{\erule}$, and $\vect{x}, \vect{y}$ and $\vect{z}$ are pairwise disjoint tuples of variables. The variables of $\vect z$ are called \emph{existential variables}. The variables of $\vect x$ form the \emph{frontier} of $\erule$, which is also denoted by $\fr{\erule}$. 
For brevity, we will omit universal quantifiers in the examples. 
A \emph{knowledge base} (KB) is of the form $\kb = \kblong$, where $\instance$ is an instance
and $\ruleset$ is a finite set of existential rules.  
 
 A  rule $\erule = \body{\erule} \rightarrow \head{\erule}$ is \emph{applicable} to an instance $\instance$ if there is a homomorphism $\match$ from $ \body{\erule} $ to $\instance$.  The pair $(\erule, \match)$ is called a \emph{trigger} for $\instance$. The result of the application of $\erule$ according to $\match$ on $\instance$ is the instance $\instance' = \instance \cup \matchsafe(\head{\erule})$, where $\matchsafe$ (here $s$ stands for \emph{safe}) extends $\match$ by assigning a distinct fresh variable (also called a \emph{null}) to each existential variable. We also say that  $\instance'$ is obtained by \emph{firing} the trigger $(\erule, \match)$ on $\instance$. 
 By $\match_{\mid\fr{\erule}}$ we denote the restriction of  $\match$ to the domain $\fr{\erule}$.

\sloppy
\begin{definition}[Derivation]
A \emph{$\ruleset$-derivation} (or simply \emph{derivation} when $\ruleset$ is clear from the context) 
from an instance $\instance = I_0$ to an instance $I_n$ is a sequence 
$I_0, (\erule_1,\match_1), I_1 \ldots, I_{n-1}, (\erule_n,\match_n), I_n$, such that for all $1 \leq i \leq n$: 
$\erule_i \in \ruleset$, $(\erule_i,\match_i)$ is a trigger for $I_{i-1}$, $I_i$ is obtained by firing $(\erule_i,\match_i)$ on $I_{i-1}$,
and $I_i \neq I_{i-1}$.  We may also denote this derivation by the associated sequence of instances $(I_0, \ldots, I_n)$ when the triggers are not needed. 
The notion of derivation can be naturally extended to an \emph{infinite} sequence.
\end{definition}
\fussy

We briefly introduce below the main chase variants and refer to  \cite{phd/Onet12} for a detailed presentation.

The \emph{semi-oblivious} chase prevents several applications of the same rule through the same mapping of its frontier. Given a derivation from $I_0$ to $I_{i}$, a trigger $(\erule,\match)$ for $I_i$ is said to be \emph{active according to the semi-oblivious criterion}, if there is no trigger $(\erule_j,\match_j)$ in the derivation with $\erule = \erule_j$ and   $\match_{\mid{\fr{\erule}}} = \match_{j_\mid{\fr{\erule_j}}}$. The \emph{restricted} chase performs a rule application only if the added set of atoms is not redundant with respect to the current instance.  Given a derivation from $I_0$ to $I_{i}$, a trigger $(\erule,\match)$ for $I_i$ is said to be \emph{active according to the restricted criterion}
if $\match$ cannot be extended to a homomorphism from $(\body{\erule}\cup\head{\erule})$ to $\instance_{i}$ (equivalently, $ \match^s(\head{\erule})$ does not fold onto $\instance_{i}$). A 
 \emph{semi-oblivious (resp. restricted) chase sequence} of $\instance$ with $\ruleset$ is a possibly infinite $\ruleset$-derivation from $\instance$ such that each trigger $(\erule_i,\match_i)$ in the derivation is active according to the semi-oblivious (resp. restricted) criterion. 
  
  Furthermore, a (possibly infinite) chase sequence is required to be \emph{fair}, which means that a possible rule application is not indefinitely delayed. Formally, if some $I_i$ in the derivation admits an active  trigger $(\erule,\match)$, then there is  $j > i$ such that, either $I_j$ is obtained by firing $(\erule,\match)$ on $I_{j-1}$, or $(\erule,\match)$ is not an active trigger anymore on $I_j$.  
A \emph{terminating} chase sequence is a finite fair sequence.  

In its original definition {\cite{DBLP:conf/pods/DeutschNR08}, the \emph{core} chase proceeds in a breadth-first manner, and, at each step, first fires in parallel all active triggers according to the restricted chase criterion, then computes the core of the result. Alternatively, to bring the definition of the core chase closer to the above definitions of the semi-oblivious and restricted chases,} one can define a \emph{core chase sequence} as a possibly infinite sequence $I_0, (\erule_1,\match_1), I_1, \ldots$, alternating instances and triggers, such that each instance $I_i$ is obtained from $I_{i-1}$ by first firing the active trigger $(\erule_i, \match_i)$ according to the restricted criterion, then computing the core of the result. An instance admits a terminating core chase sequence in that sense if and only if the core chase as originally defined terminates on that instance. 

For the three chase variants, fair chase sequences compute a (possibly infinite) \emph{universal model} of the KB, 
but only the core chase stops if and only if the KB has a \emph{finite} universal model. 

\medskip
It is well-known that, for the semi-oblivious and the core chase, if there is a terminating chase sequence from  an instance $I$ then all fair sequences from $I$ are terminating. 
This is not the case for the restricted chase, since the order in which rules are applied has an impact on termination, as illustrated by Example \ref{ex-intro}. 

\begin{example}\label{ex-intro} Let $\ruleset = \{\erule_1, \erule_2\}$, with  $\erule_1 = p(x,y) \rightarrow \exists z ~ p(y,z)$ and $\erule_2 =  p(x,y) \rightarrow p(y,y)$.
Let $\instance = p(a,b)$. The KB $(\instance, \ruleset)$ has a finite universal model, for example, $I^* = \{p(a,b), p(b,b)\}$. 
The semi-oblivious chase does not terminate on $\instance$ as  $\erule_1$ is applied indefinitely, while the core chase terminates after one breadth-first
 step and returns $I^*$.  The restricted chase has a terminating sequence, for example, $(\erule_2, \{x \mapsto a, y \mapsto b\})$, which yields $I^*$ as well, but it also has infinite fair sequences,
for example, the breadth-first sequence that applies $\erule_1$ before $\erule_2$ at each step. 
\end{example}

\pagebreak
We study the following problems for the semi-oblivious, restricted and core chase variants: 
\begin{itemize}
\item \emph{(All instance) all sequence termination:} Given a set of rules $\ruleset$, is it true that, for any instance,  all fair sequences are terminating?  
\item \emph{(All instance) one sequence termination:} Given a set of rules $\ruleset$, is it true that, for any instance,  there is a terminating sequence?
\end{itemize}

Note that, according to the terminology of \cite{DBLP:journals/fuin/GrahneO18}, these problems can be recast as deciding whether, for a  chase variant, a given set of rules belongs to the class  CT$_{\forall\forall}$ or  CT$_{\forall\exists}$, respectively. 

\medskip
An existential rule is called \emph{linear} if its body and its head are both composed of a single atom (e.g., \cite{DBLP:journals/ws/CaliGL12}). Linear rules generalize \emph{inclusion dependencies} 
\cite{DBLP:journals/tods/Fagin81} by allowing several occurrences of the same variable in an atom. They also generalize positive inclusions in the description logic DL-Lite$_\mathcal R$  (the formal basis of the web ontological language OWL2 QL) \cite{DBLP:journals/jar/CalvaneseGLLR07}, which can be seen as inclusion dependencies restricted to unary and binary predicates. 

Note that the restriction of existential rules to rules with a single head is often made in the literature, considering that any existential rule with a complex head can be decomposed into several rules with a single head, by introducing a fresh predicate for each rule. However, while this translation preserves the termination of the semi-oblivious chase, it is not the case for the restricted and the core chases. Hence, considering linear rules with a complex head would require to extend the techniques developed in this paper.

\medskip
To simplify the presentation, we assume in the following that each rule frontier is of size at least one. This assumption is made without loss of generality. \footnote{For instance, it can always be ensured by adding a position to all predicates, which is filled by the same fresh constant in the initial instance, and by a new frontier variable in each rule.} 

\medskip
We first point out that the termination problem on linear rules can be recast by considering solely instances that contain a single atom (as already remarked in several contexts). 

\begin{propositionrep}
\label{prop-atomic-instance}
 Let $\ruleset$ be a linear set of rules. The semi-oblivious (resp. restricted, core) chase terminates on all instances if and only if it terminates on all singleton instances.
\end{propositionrep}
\begin{proof}
Obviously, the fact that a chase variant does not halt on  an atomic instance implies the fact that it does not terminate on all instances. On the other direction, we can easily see that if the chase does not halt on an instance then it will not halt on one of its atoms.
For a chase variant that does not terminate there exists an infinite derivation whose associated chase graph is also infinite.
As the arity of the nodes in the chase graph is bounded by the size of the ruleset, the chase graph must contains an infinite path starting from a node of the initial instance.
Because the chase graph for linear rules forms a tree it follows that this infinite path is created by a single atom of the initial instance.

\end{proof}

We will furthermore rely on the following notion of the type of an atom. 
 
  \begin{definition}[Type of an atom]
\label{definition-type}
 The \emph{type of an atom} $\atom = r(t_1,\ldots, t_n)$, denoted by $\type{\atom}$, is the pair $(r,\mathcal{P})$ where $\mathcal{P}$ is the partition of $\{1,\ldots,n\}$ induced by term equality (i.e., $i$ and $j$ are in the same class of $\mathcal{P}$ iff $\term_i = \term_j$).
 \end{definition} 
 
Note that there are finitely (more specifically, exponentially) many types for a given vocabulary. 

If two atoms $\atom$ and $\atom'$ have the same type, then there is a \emph{natural mapping} from $\atom$ to $\atom'$, denoted by $\varphi_{\atom\rightarrow \atom'}$, and defined as follows:
it is a bijective mapping from $\terms{\atom}$ to $\terms{\atom'}$, that maps the $i$-th term of $\atom$ to the $i$-th term of $\atom'$. 
Note that $\varphi_{\atom\rightarrow \atom'}$ may not be an isomorphism, as constants from $\atom$ may not be mapped to themselves. However, if $(\erule,\match)$ is a trigger for $\{\atom\}$, then $(\erule,\varphi_{\atom\rightarrow \atom'}\circ\match)$ is a trigger for $\{\atom'\}$, as there are no constants in the considered rules. 

Together with Proposition \ref{prop-atomic-instance}, this implies that one can check all instance all sequence termination by checking all sequence termination on a finite set of instances, called \emph{canonical instances}: for each type, there is exactly one canonical instance that has this type.

We will consider different kinds of tree structures, which have in common to be \emph{trees of bags}: these are rooted trees,  whose nodes, called \emph{bags}, are labeled by an atom.\footnote{Furthermore the trees we will consider are decomposition trees of the associated set of atoms. That is why we use the classical term of \emph{bag} to denote a node.}
We define the following notations for any node $\bag$ of a tree of bags $\derivationtree$:
\begin{itemize}
\item  $\atoms{\bag}$ is the label of $\bag$;
\item $\terms{\bag} = \terms{\atoms{\bag}}$ is the set of terms of $B$;
\item  $\terms{\bag}$ is divided into two sets of terms, those \emph{generated} in $\bag$, denoted by $\generated{\bag}$, and those shared with its parent, denoted by 
$\shared{\bag}$; precisely, $\terms{\bag} = \shared{\bag} \cup \generated{\bag}$,  $\shared{\bag} \cap \generated{\bag} = \emptyset$,  and if $\bag$ is the root of $\derivationtree$, then $\generated{\bag} = \terms{\bag}$ (hence $\shared{\bag} = \emptyset$), otherwise $\bag$ has a parent $\bag_p$ and $\generated{\bag} = \terms{\bag} \setminus \terms{\bag_p}$ (hence, $\shared{\bag} = \terms{\bag_p} \cap \terms{\bag}$).
\end{itemize}

 We denote by $\treeatoms{\derivationtree}$ the set of atoms that label the bags in $\derivationtree$.

Finally, we recall some classical mathematical notions. A \emph{subsequence} $S'$ of a sequence  $S$ is a sequence that can be obtained from $S$ by deleting some (or no) elements without changing the order of the remaining elements. The \emph{arity} of a tree is the maximal number of children for a node. A \emph{prefix} $T'$ of a tree $T$ is a tree that can be obtained from $T$ by repeatedly deleting some (or no) leaves of $T$. 

\section{Derivation Trees}

A classical  tool to reason about the chase is the so-called \emph{chase graph} (see e.g., \cite{DBLP:journals/ws/CaliGL12}), which is the directed graph consisting of all atoms that appear in the considered derivation, and with an arrow from a node $n_1$ to a node $n_2$ iff $n_2$ is created by a rule application on $n_1$ and possibly other atoms.   \footnote{Note that the chase graph in \cite{DBLP:conf/pods/DeutschNR08} is a different notion.} In the specific case of KBs of the form $(\{\atom\}, \ruleset)$, where $\atom$ is an atom and $\ruleset$ is a set of linear rules, the chase graph is a tree.  We recall below its definition in this specific case, in order to emphasize its differences with another tree, called \emph{derivation tree}, on which we will actually rely.

\begin{definition}[Chase Graph for Linear Rules]
 \label{definition-chase-graph}
Let $\instance$ be a singleton instance, $\ruleset$ be a set of linear rules and $\instance = I_0, (\erule_1,\match_1), I_1 \ldots, I_{n-1}, (\erule_n,\match_n), I_n$ be a semi-oblivious $\ruleset$-derivation from $\instance$. The \emph{chase graph} (also called \emph{chase tree}) assigned to $S$ is a tree of bags built as follows:
 \begin{itemize}
 \item the set of bags is in bijection with $\instance_n$ via the labeling function $\atoms{}$; 
\item the set of edges is in bijection with the set of triggers in $S$ and is built as follows: for each trigger $(\erule_i,\match_i)$  in $S$, 
there is an edge $(\bag,\bag')$ with $\atoms{\bag} = {\match_i(\body{\erule_i})}$ and $\atoms{\bag'} = \match_i^s(\head{\erule_i})$. 
\end{itemize}
\end{definition}

\begin{example}
 \label{example-chase-graph-not-enough}
 Let $\instance = q(a)$ and $\ruleset=\{\erule_1,\erule_2\}$ where
 $\erule_1 = q(x) \rightarrow \exists y \exists z \exists t ~p(x,y,z,t)$ 
and
 $\erule_2 = p(x,y,z,t) \rightarrow p(x,z,t,y)$.
 Let $S = \instance,(\erule_1,\match_1),\instance_1,(\erule_2,\match_2),\instance_3,(\erule_2,\match_3),\instance_3$ with $\match_1 = \{ x \mapsto a\}$, 
 $\match_1^s(\head{\erule_1}) = p(a,y_0,z_0,t_0)$, $\match_2 = \{ x \mapsto a, y \mapsto y_0, z \mapsto z_0, t \mapsto t_0\}$
 and ${\match_3 = \{ x \mapsto a, y \mapsto z_0,}$ $ z \mapsto t_0, t \mapsto y_0\}$. 
 The chase graph associated with $S$ is a path of four nodes 
  as represented in Figure~\ref{figure-chase-graph-not-enough}.
 \end{example}

\begin{figure}[t]
\begin{center}
\tikzset{every picture/.style={line width=0.75pt}} 

\begin{tikzpicture}[x=0.75pt,y=0.75pt,yscale=-1,xscale=1,thick,scale=0.6, every node/.style={scale=0.7}]

\draw    (126, 38) circle [x radius= 35, y radius= 20]  ;
\draw    (126.25, 104.5) circle [x radius= 55.25, y radius= 24.5]  ;
\draw    (126.25, 175.5) circle [x radius= 55.25, y radius= 24.5]  ;
\draw    (126.25, 246.5) circle [x radius= 55.25, y radius= 24.5]  ;
\draw    (126,58) -- (126,80) ;

\draw    (126,129) -- (126,151) ;

\draw    (126,200) -- (126,222) ;

\draw    (366, 37) circle [x radius= 35, y radius= 20]  ;
\draw    (366.25, 103.5) circle [x radius= 55.25, y radius= 24.5]  ;
\draw    (293.25, 176.5) circle [x radius= 55.25, y radius= 24.5]  ;
\draw    (434.25, 176.5) circle [x radius= 55.25, y radius= 24.5]  ;
\draw    (366,57) -- (366.25,79) ;

\draw    (366.25,128) -- (293.25,152) ;

\draw    (366.25,128) -- (434.25,152) ;

\draw (126,37) node   {$q( a)$};
\draw (124,105) node   {$p( a,y_{0} ,z_{0} ,t_{0})$};
\draw (125,176) node   {$p( a,z_{0} ,t_{0} ,y_{0})$};
\draw (126,247) node   {$p( a,t_{0} ,y_{0} ,z_{0})$};
\draw (76,38) node   {$B_{0}$};
\draw (56,105) node   {$B_{1}$};
\draw (54,175) node   {$B_{2}$};
\draw (57,250) node   {$B_{3}$};
\draw (117,300) node  [align=left] {Chase Graph};
\draw (367,36) node   {$q( a)$};
\draw (365,104) node   {$p( a,y_{0} ,z_{0} ,t_{0})$};
\draw (292,177) node   {$p( a,z_{0} ,t_{0} ,y_{0})$};
\draw (433,177) node   {$p( a,t_{0} ,y_{0} ,z_{0})$};
\draw (317,37) node   {$B_{0}$};
\draw (297,104) node   {$B_{1}$};
\draw (221,176) node   {$B_{2}$};
\draw (505,176) node   {$B_{3}$};
\draw (364,299) node  [align=left] {Derivation Tree};

\end{tikzpicture}
\end{center}
\caption{Chase Graph and Derivation Tree of Example \ref{example-chase-graph-not-enough}}
\label{figure-chase-graph-not-enough}
\end{figure}
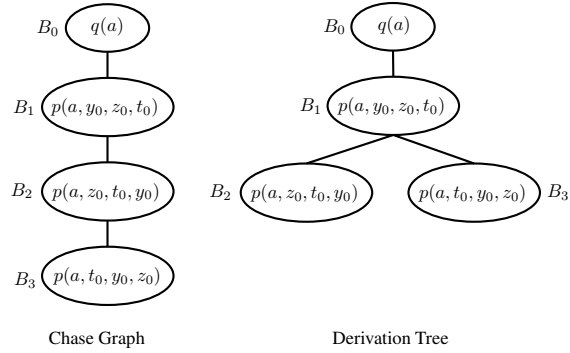

 \medskip To check termination of a chase variant on a given KB $(\{\atom\}, \ruleset)$, the general idea is to build a tree of bags associated with the chase on this KB in such a way that the occurrence of some forbidden pattern indicates that a path of unbounded length can be developed, hence the chase does not terminate.  The forbidden pattern is composed of two distinct nodes such that one is an ancestor of the other and, intuitively speaking, these nodes ``can be extended in similar ways'', which leads to an arbitrarily long path that repeats the pattern.

Two atoms with the same type admit the same rule triggers, however, within a derivation, the same rule applications cannot necessarily be performed on both of them because of the presence of other atoms (this is true already for datalog rules, since the same atom is never produced twice).  
Hence, on the one hand we will specialize the notion of type, into that of a \emph{sharing type}, and, on the other hand, adopt another tree structure,  called a \emph{derivation tree}, in which two nodes with the same sharing type have the required similar behavior. 

  \begin{definition}[Sharing type and Twins]
\label{definition-sharing-type}
Given a tree of bags,
the \emph{sharing type} of a bag $\bag$ is a pair $ (\type{\atoms{\bag}},P)$ where $P$ is the set of positions in $\atoms{\bag}$ in which a term of $\shared{\bag}$ occurs. We denote the fact that two bags $\bag$ and $\bag'$ have the same sharing type by $\bag \stequiv \bag'$. Furthermore, we say that two bags $\bag$ and $\bag'$ are \emph{twins} if they have the same sharing type, the same parent $\bag_p$ and if the natural mapping $\varphi_{\atoms{B}\rightarrow\atoms{B'}}$ is the identity on the terms of $\atoms{\bag_p}$.   
 \end{definition}
 
 We can now specify the forbidden pattern that we will consider: it is a pair of two distinct nodes with the same sharing type, such that one is an ancestor of the other. 
 
\begin{definition}[Unbounded-Path Witness]
 An \emph{unbounded-path witness} (UPW) in a derivation tree is a pair of distinct bags $(\bag,\bag')$ such that $\bag$ and $\bag'$ have the same sharing type and $\bag$ is an ancestor of $\bag'$. 
\end{definition}

 As explained below on  Example \ref{example-chase-graph-not-enough}, the chase graph is not the appropriate tool to use this forbidden pattern as a witness of chase non-termination. 
 
  \medskip
\noindent \emph{Example \ref{example-chase-graph-not-enough} (cont'd).}  $B_1$, $B_2$ and $B_3$ have the same classical type,\\
$t =  (p, \{\{1\}, \{2\},\{3\},\{4\}\})$. The sharing type of $B_1$ is $(t,\{1\})$, while $B_2$ and $B_3$ have the same sharing type 
$(t,\{1,2,3,4\})$. $B_2$ and $B_3$ fulfill the condition of the forbidden pattern, however it is easily checked that any derivation that extends this derivation is finite. 
  
\medskip 
Derivation trees were introduced as a tool to define the \emph{greedy bounded treewidth set (gbts)} family of existential rules \cite{DBLP:conf/ijcai/BagetMRT11,DBLP:phd/hal/Thomazo13}.  
A derivation tree is associated with a derivation, however it does not have the same structure as the chase graph. The fundamental reason is that, when a rule $\erule$ is applied to an atom $\atom$ via a homomorphism $\match$, the newly created bag is not necessarily attached in the tree as a child of the bag labeled by $\atom$. Instead, it is attached as a child of the \emph{highest} bag in the tree labeled by an atom that contains 
$\match(\fr{\erule})$, the image by $\match$ of the frontier of $\erule$ (note that $\match(\fr{\erule})$ remains the set of terms shared between the new bag and its parent). 

In the following definition, a derivation tree is not associated with \emph{any} derivation, but with a semi-oblivious derivation, which has the advantage  of yielding trees with bounded arity (Proposition \ref{prop-bounded-arity} in the Appendix). This is appropriate to study the termination of the semi-oblivious chase, and later the restricted chase, as a restricted chase sequence is a specific semi-oblivious chase sequence.

\begin{definition}[Derivation Tree]
 Let $\instance = \{\atom\}$ be a singleton instance, $\ruleset$ be a set of linear rules, and $\derivation = \instance_0,(\erule_1,\match_1),\instance_1, \ldots, (\erule_n,\match_n),\instance_n$ be a semi-oblivious $\ruleset$-derivation.
 The \emph{derivation tree} assigned to $\derivation$ is a tree of bags 
 $\derivationtree$ built as follows:
 \begin{itemize}
  \item the root of the tree, $\bag_0$, is such that $\atoms{\bag_0} = \atom$;
  \item for each trigger $(\erule_i,\match_i)$, $0 < i \leq n$, let $\bag_i $ be the bag such that $\atoms{\bag_i} = \matchsafe_{i}(\head{\erule_{i}})$. Let $j$ be smallest integer such that $\match_{i}(\fr{\erule_{i}}) \subseteq \terms{\bag_j}$: $\bag_i$ is added as a child to $\bag_j$. 
 \end{itemize}
 By extension, we say that a derivation tree $ \derivationtree$ is \emph{associated with $\atom$ and $\ruleset$} if there exists a semi-oblivious $\ruleset$-derivation $\derivation$ from $\atom$ such that $\derivationtree$ is assigned to $\derivation$.
  
\end{definition}

\noindent \emph{Example \ref{example-chase-graph-not-enough} (cont'd).} 
The derivation tree associated with $S$ is represented in Figure \ref{figure-chase-graph-not-enough}. Bags have the same sharing types in the chase tree and in the derivation tree. However, we can see here that they are not linked in the same way: $\bag_3$ was a child of $\bag_2$ in the chase tree, it becomes a child of $\bag_1$ in the derivation tree. Hence, the forbidden pattern cannot be found anymore in the tree. 

\medskip
Note that every non-root bag $\bag$ shares a least one term with its parent (since the rule frontiers are not empty), furthermore this term is generated in its parent (otherwise $\bag$ would have been added at a higher level in the tree).

\begin{toappendix}
\begin{proposition}\label{prop-bounded-arity}
The arity of a derivation tree is bounded. 
\end{proposition}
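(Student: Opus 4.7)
The plan is to show that, for any semi-oblivious derivation $\derivation$ and its assigned derivation tree $\derivationtree$, every bag $\bag$ of $\derivationtree$ has at most $|\ruleset|\cdot \maxArity^{\maxArity}$ children, where $\maxArity$ is the maximum predicate arity of the vocabulary. This is a constant depending only on the ruleset and the vocabulary, so it uniformly bounds the arity of every derivation tree associated with $\ruleset$.

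Fix such a bag $\bag$ and consider the set of its children. By construction of the derivation tree, each child comes from a unique trigger $(\erule_i,\match_i)$ of $\derivation$ for which $\bag$ is the highest bag satisfying $\match_i(\fr{\erule_i})\subseteq\terms{\bag}$; in particular $\match_i(\fr{\erule_i})\subseteq\terms{\bag}$. Distinct children correspond to distinct triggers, since each trigger creates exactly one new bag and attaches it to a single parent. Because $\derivation$ is semi-oblivious, the derivation contains no two distinct triggers $(\erule_i,\match_i)$ and $(\erule_j,\match_j)$ with $\erule_i=\erule_j$ and ${\match_i}_{\mid\fr{\erule_i}}={\match_j}_{\mid\fr{\erule_j}}$. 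Consequently the children of $\bag$ are in injective correspondence with a subset of the pairs $(\erule,h)$ where $\erule\in\ruleset$ and $h\colon\fr{\erule}\to\terms{\bag}$.

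It remains to count such pairs. Linearity ensures that every bag of $\derivationtree$ is labeled by a single atom, so $|\terms{\bag}|\leq\maxArity$; linearity likewise implies that the frontier of any rule occurs inside its single-atom body, so $|\fr{\erule}|\leq\maxArity$. The number of pairs is therefore at most $\sum_{\erule\in\ruleset}|\terms{\bag}|^{|\fr{\erule}|}\leq|\ruleset|\cdot\maxArity^{\maxArity}$, yielding the claimed bound on the arity.

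The only delicate point in this plan is the injective correspondence in the second paragraph, and it is precisely what the semi-oblivious criterion gives us: without it, many body matches of the same rule sharing the same frontier restriction could each generate distinct children of $\bag$, and no uniform bound on the arity would exist. Linearity, by contrast, enters only in the final counting step through the two \mbox{arity bounds on $|\terms{\bag}|$ and $|\fr{\erule}|$}.
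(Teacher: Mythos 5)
Your proof is correct and rests on exactly the same key observation as the paper's: each child of a bag $\bag$ arises from a trigger whose frontier maps into $\terms{\bag}$, and the semi-oblivious criterion forbids two triggers with the same rule and the same frontier restriction, so children inject into (rule, frontier-map) pairs. The paper phrases this through the notion of twin children rather than counting the pairs $(\erule,h)$ directly, but that is a cosmetic difference; your version even yields the explicit bound $|\ruleset|\cdot \maxArity^{\maxArity}$.
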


\begin{proof}
We first point out that a bag has a bounded number of twin children. Since we consider semi-oblivious derivations, a bag $\bag_p$ cannot have two twin children $\bag_{c_1}$ and $\bag_{c_2}$, created by  applications of the same rule $\erule$. Indeed, although these rule applications may map $\body{\erule}$ to distinct atoms, the associated homomorphisms, say $\match_1$ and $\match_2$,  would have the same restriction to the rule frontier,
i.e.,  $\match_1{_{\mid\fr{\erule}}} = \match_2{_{\mid\fr{\erule}}}$. Hence, all twin children of a bag come from applications of distinct rules. It follows that the arity of a node is bounded by the number of atom types $\times$ the cardinal of the ruleset. 
\end{proof}
\end{toappendix}

\section{Semi-Oblivious and Restricted Chase Termination}
We now use derivation trees and sharing types to characterize the termination of the semi-oblivous chase. The fundamental property of derivation trees that we exploit is that, when two nodes have the same sharing type, the considered (semi-oblivious) derivation can always be extended so that these nodes have the same number of children, and in turn these children have the same sharing type. We first specify the notion of \emph{bag copy}.

\begin{definition}[Bag Copy]
 Let $\entailmenttree,\entailmenttree'$ be two (possibly equal) trees of bags. Let $\bag$ be a bag of $\entailmenttree$ and $\bag'$ be a bag of $\entailmenttree'$  such that $\bag \stequiv \bag'$. Let $\bag_c$ be a child of $\bag$.
 A \emph{copy} of $\bag_c$ \emph{under} $\bag'$ is a bag $\bag'_c$ such that $\atoms{\bag'_c} = \varphi^s(\atoms{\bag_c})$, where $\varphi^s$ is a substitution of $\terms{\bag_c}$ defined as follows:
 \begin{itemize}
  \item if $\term \in \shared{\bag_c}$, then $\varphi^s(\term) = \varphi_{\atoms{\bag} \rightarrow \atoms{\bag'}}(\term)$, where  $\varphi_{\atoms{\bag} \rightarrow \atoms{\bag'}}$ is the natural mapping from $\atoms{\bag}$ to $\atoms{\bag'}$;
  \item if $\term \in \generated{\bag_c}$, then $\varphi^s(\term)$ is a fresh new {variable}.
 \end{itemize}
\end{definition}

Let $\derivationtree_e$ be obtained from a derivation tree $\derivationtree$ by adding a copy of a bag: strictly speaking, $\derivationtree_e$  may not be a derivation tree in the sense that there may be no derivation to which it can be assigned (intuitively, some rule applications that would allow to produce the copy may be missing). Rather, there is some derivation tree of which $\derivationtree_e$ is a \emph{prefix} (intuitively, one can add  bags to  $\derivationtree_e$ to obtain a derivation tree). That is why the following proposition considers more generally prefixes of derivation trees. 

\begin{propositionrep}
\label{proposition-sharing-type-children-derivation-tree}
 Let $\derivationtree$ be a prefix of a derivation tree, $\bag$ and $\bag'$ be two bags of $\derivationtree$ such that $\bag \stequiv \bag'$, and $\bag_c$ be a child of $\bag$. Then: \emph{(a)} the tree obtained from $\derivationtree$ by adding the copy $\bag'_c$ of $\bag_c$ under $\bag'$ is a prefix of a derivation tree, and \emph{(b)} it holds that $\bag_c \stequiv \bag'_c$.
\end{propositionrep}

\begin{proof}
 Let $\bag$ and $\bag'$ be two atoms of $\derivationtree$ having the same sharing type. 
  Let $\bag_c$ be a child of $\bag$ created by a trigger $(\erule,\match)$. 
By definition of derivation tree,  $\match$ maps the rule frontier $\fr{\erule}$ to $\terms{\bag}$, without this being possible for the parent of $\bag$. 
 Furthermore, we know that $\match$ maps  $\body{\erule}$ to  a (possibly strict) descendant of $\bag$.
 We assume that $\derivationtree$ does not already contain the image of $\head{\erule}$ via $\match$, otherwise the thesis trivially holds.
  Let $\derivation$ be the derivation associated with $\derivationtree$ and
  $\alpha_0,\dots,\alpha_{k}$ be the path of the \emph{chase-graph}  associated with $\derivation$ such that $\alpha_0=\atoms{\bag}$ and $\alpha_{k}=\atoms{\bag_{c}}$, whose sequence of associated rule applications is $(\erule_1,\match_1),\dots, (\erule_{k},\match_{k})=(\erule,\match)$. 
We define $\hat\match_i^{\mathrm{safe}}(t)=\varphi_{\atoms{\bag}\rightarrow\atoms{\bag'}}\circ\match_i(t)$ whenever $\match_i(t)\in\terms{\bag}$ and otherwise
$\hat\match_i^{\mathrm{safe}}(t)$ to be a fresh new variable consistently used over the rule applications, that is, such that 
$\match_i^{\mathrm{safe}}(t)=\match_j^{\mathrm{safe}}(t)$
if and only if 
$\hat\match_i^{\mathrm{safe}}(t)=\hat\match_j^{\mathrm{safe}}(t)$.
  Then, for all $1\leq i \leq k$, we extend $\derivation$ by adding a trigger $(\erule_i,\hat\match_i)$\footnote{$\hat\match_i$ is the restriction of $\hat\match_i^{\mathrm{safe}}$ to the variables of the body of $\erule_i$.} whenever 
  $\hat\match_i^{\mathrm{safe}}(\head{\erule_i})$ is not an atom already produced by $\derivation$ thereby obtaining a new derivation $\derivation'$. 
  Let $\derivationtree'$ be an extension of $\derivationtree$ where a bag labeled with the atom $\hat\match_i^{\mathrm{safe}}(\head{\erule_i})$ is added for each new trigger in $\derivation'$ and attached to the highest descendant of $\bag'$ whose set of terms contains $\hat\match_i^{}(\fr{\erule_i})$. Clearly, $\derivationtree'$ is a derivation tree associated with $\derivation'$. 
We now show that $\derivationtree'$ contains a node $\bag_c'$ which is a copy of $\bag_c$ under $\bag'$.
 As $\bag$ is the parent of $\bag_c$, the image of $\fr{\erule}$ via $\match$ contains at least one term which is generated in $\bag$ (and in general only terms generated by the ancestors of $\bag$).
Therefore,  because $\bag$ and $\bag'$ have the same sharing type, the image of $\fr{\erule}$ via $\varphi_{\atoms{\bag}\rightarrow \atoms{\bag'}}\circ\match$ contains at least one term generated in $\bag'$ (and in general only terms generated by the ancestors of $\bag'$). So, $\bag'$ is the only possible parent of $\bag'_c$ in $\derivationtree'$. Moreover, it is easy to see that $\bag_c \stequiv \bag'_c$. 
Let $\derivationtree''$ be the extension of $\derivationtree$ with $\bag_c'$ under $\bag'$. It can be easily verified that  $\derivationtree''$ is a prefix of the derivation tree $\derivationtree'$, in the sense that it is a tree of bags which can be obtained by recursively removing some of the leaves of $\derivationtree'$, i.e., those corresponding to the triggers in $\derivation'\setminus\derivation$ which are different from $(\erule,\match)$.
\end{proof}

The size of a derivation tree without UPW is bounded, since its arity is bounded (Proposition \ref{prop-bounded-arity} in the Appendix) and its depth is bounded by the number of sharing types. It remains to show that a derivation tree that contains a UPW can be extended to an arbitrarily large derivation tree. We recall that similar property would not hold for the chase tree, as witnessed by Example~\ref{example-chase-graph-not-enough}.

\begin{propositionrep}
\label{proposition-finiteness-derivation-tree}
 There exists an arbitrary large derivation tree associated with $\atom$ and $\ruleset$ if and only if there exists a derivation tree associated with $\atom$ and $\ruleset$ that contains an unbounded-path witness.
\end{propositionrep}

\begin{proof}
If there is no derivation tree having an unbounded-path witness, then the depth of all derivation trees is upper bounded by the number of sharing types. As derivation trees are of bounded arity, all derivation trees must be of bounded size.
 
If there is a derivation tree $\derivationtree$ having an unbounded-path witness $(\bag,\bag')$, we show that there are arbitrary large derivation trees. We do so by contradiction. Let us assume that $(\bag,\bag')$ is a UPW be two such bags such that $\bag'$ is of maximal depth among all such pairs and among all trees, which by hypothesis are of bounded size. Let $\bag_c$ be the child of $\bag$ that is on the shortest path from $\bag$ to $\bag'$ (possibly $\bag_c = \bag'$). By Proposition \ref{proposition-sharing-type-children-derivation-tree}, $\bag'$ has a child $\bag'_c$ that has the same sharing type as $\bag_c$. By Proposition \ref{proposition-sharing-type-children-derivation-tree}, $\bag'$ has a child $\bag'_c$ that has the same sharing type as $\bag_c$, either in the same tree, or in an extension of this tree, which is in contradiction with the fact that $\bag'$ was of maximal depth.  Hence there are arbitrary large derivation trees.
\end{proof}

The previous proposition yields a characterization of the existence of an infinite semi-oblivious derivation. At this point, one may notice that an infinite semi-oblivious derivation is not necessarily fair. However, from this infinite derivation one can always build a fair derivation by inserting missing triggers. Obviously, this operation has no effect on the termination of the semi-oblivious chase. More precaution will be required for the restricted chase.

One obtains an algorithm to decide termination of the semi-oblivious chase for a given set of rules: for each canonical instance, build a semi-oblivious derivation and the associated derivation tree by applying rules until a UPW  is created (in which case the answer is no) or all possible rule applications have been performed; if no instance has returned a negative answer, the answer is yes.  

\begin{corollary}
\label{corollary-semi-oblivious-finiteness-decidability}
The all-sequence termination problem for the semi-oblivious chase  on linear rules is decidable. 
\end{corollary}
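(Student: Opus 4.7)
The plan is to combine the preceding propositions into a small-search procedure run on each canonical instance. By Proposition \ref{prop-atomic-instance}, all-instance termination of the semi-oblivious chase reduces to termination on singleton instances, and since any two singleton instances of the same type admit, via the natural mapping, the same derivations up to bijective renaming, only one singleton per type needs to be examined. The number of types over a fixed vocabulary is finite (exponential in the maximal arity), so we have a finite set of canonical instances $\{\atom\}$ to treat, and it suffices to give a decision procedure for each.

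For a fixed canonical instance $\{\atom\}$, I would use Proposition \ref{proposition-finiteness-derivation-tree} to reformulate termination: the semi-oblivious chase admits an infinite derivation from $\atom$ iff some derivation tree associated with $\atom$ and $\ruleset$ contains an unbounded-path witness. Derivation trees without a UPW have depth bounded by the number of sharing types (finite over the given vocabulary) and arity bounded by Proposition \ref{prop-bounded-arity}, hence bounded overall size. So the set of UPW-free derivation trees is finite up to renaming, and looking for a UPW can be done by an exhaustive bounded search.

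Concretely, the procedure grows a derivation tree in breadth-first fashion by firing, one by one, the active triggers according to the semi-oblivious criterion and attaching each newly created bag under the highest bag whose terms contain the image of the frontier, as prescribed by the definition of derivation tree. After every insertion, check whether the new bag together with one of its ancestors forms a UPW; if so, return \emph{non-terminating}. If instead the construction terminates because no active trigger remains, return \emph{terminating}. Since a UPW-free derivation tree has bounded size, the search halts in either case. Correctness of the \emph{non-terminating} answer follows from Proposition \ref{proposition-finiteness-derivation-tree}, which provides arbitrarily large derivation trees, and hence an infinite semi-oblivious derivation; this derivation can be completed into a fair one by scheduling any remaining active triggers, which witnesses that not all fair sequences terminate. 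Correctness of the \emph{terminating} answer follows because, absent any UPW, every derivation tree from $\atom$ is bounded in size, so every semi-oblivious derivation from $\atom$ is finite, and therefore so is every fair sequence.

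The main subtlety is not the decidability engine itself but the bridge between \emph{some} semi-oblivious derivation being infinite and \emph{all fair} semi-oblivious sequences being non-terminating. This uses two standard facts about the semi-oblivious chase: any infinite derivation can be turned into a fair one by interleaving the deferred active triggers, and conversely termination of one fair semi-oblivious sequence entails termination of every fair sequence from the same instance. With these in hand, the bounded enumeration of derivation trees described above yields a sound and complete decision procedure, establishing the corollary.
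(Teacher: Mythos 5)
Your proposal is correct and follows essentially the same route as the paper: reduce to finitely many canonical singleton instances via Proposition~\ref{prop-atomic-instance} and types, exhaustively build the derivation tree while testing for a UPW, invoke Proposition~\ref{proposition-finiteness-derivation-tree} for correctness, and repair fairness by interleaving deferred triggers. The only difference is that you spell out the fairness bridge slightly more explicitly than the paper does, which is a welcome but not substantive addition.
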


\fussy
This algorithm can be modified to run in polynomial space (which is optimal \cite{DBLP:conf/pods/CalauttiGP15}), by guessing a canonical instance and a UPW of its derivation tree. 
\sloppy

\begin{propositionrep}
 The all-sequence termination problem for the semi-oblivious chase on linear rules is in \textsc{PSpace}.
\end{propositionrep}

\begin{proof}
 Let $\entailmenttree$ be a derivation tree of root the canonical instance $\{\atom\}$ that contains a UPW $(\bag,\bag')$, where the sharing type of both bags is $ST$. We show that there exists a semi-oblivious derivation of length at most exponential whose derivation tree has root $\{\atom\}$ and that contains a UPW $(\bag_s,\bag'_s)$ where the sharing type of both bags is $ST$. First, by Proposition \ref{proposition-sharing-type-children-derivation-tree}, we conclude that it is not necessary to have twice the same sharing type on the path from the root to $\bag'$ in the derivation tree. It is thus enough to show that to generate a child $\bag_c$ from its parent $\bag_p$, a derivation of length at most exponential is necessary. Let us consider the chase graph of the derivation generating $\atoms{\bag_c}$ from $\atoms{\bag_p}$. This chase graph can be assumed w.l.o.g. to be a path. It there are no pairs of atoms having the same sharing type on this path, then the derivation is of length at most exponential. Otherwise, we show that we can build a shorter semi-oblivious derivation that generates $\atoms{\bag_c}$. Let us thus assume that there is $\bag$ and $\bag'$ such that both have the same sharing type, and the terms of $\bag_p$ that appear in $\bag$ appear in the same position in $\bag'$, and that $\bag'$ is on the path from $\bag$ to $\bag_c$ in the chase graph. A derivation similar to that applicable after $\bag'$ is actually applicable to $\bag$, by Proposition \ref{proposition-sharing-type-children-derivation-tree}. A copy of $\bag_c$ under $\bag_p$ is thus generated by this derivation, which proves our claim. 
 
 We now describe the algorithm. We guess the canonical instance and the sharing type $ST$ of the UPW. We then check that there is a descendant (not necessarily a child) of that canonical instance that has sharing type $ST$. This can be done by guessing the shortest derivation creating a bag of sharing type $ST$. It is only necessary to remember the sharing type of the ``current'' bag, as we know that any bag created during a derivation is added as a descendant of the root. We then want to prove that a bag of sharing type $ST$ can have a (strict) descendant of sharing type $ST$. In contrast with the case of the root, a trigger applied below a bag $\bag$ does not necessarily create a bag that is as well below $\bag$ -- it could be added higher up in the tree. We thus have to remember the shared variables of $\bag$, and verify at each step that the shared variables of the currently considered bag are not a subset of them. This leads to a \textsc{PSpace} procedure.
\end{proof}

We now consider the restricted chase. To this aim, we call \emph{restricted derivation tree} associated with $\atom$ and $\ruleset$ a derivation tree associated with a restricted $\ruleset$-derivation from $\atom$. We first point out that Proposition \ref{proposition-sharing-type-children-derivation-tree} is not true anymore for a restricted derivation tree, as the order in which rules are applied matters. 

\begin{example}
Consider a restricted tree that contains bags $B$ and $B'$ with the same sharing type, labeled by atoms $q(t,u)$ and $q(v,w)$ respectively, where the second term is generated.
Consider the following rules (the same as in Example \ref{ex-intro}):\\
$\erule_1: q(x,y) \rightarrow \exists z ~q(y,z)$\\
$\erule_2: q(x,y) \rightarrow q(y,y)$\\
Assume $\bag$ has a child $\bag_c$ labeled by $q(u,z_0)$ obtained by an application of $\erule_1$, and $\bag'$ has a child $\bag'_1$ labeled by $q(w,w)$ obtained by an application of $\erule_2$. It is not possible to extend this tree by copying $\bag_c$ under $\bag'$. Indeed, the corresponding application of $\erule_1$ does not comply with the restricted chase criterion: it would produce an atom of the form $q(w,z_1)$ that folds into $q(w,w)$. 
\end{example}

We thus prove a weaker proposition by considering that $\bag'$ is a leaf in the restricted derivation tree.

\begin{proposition}
\label{proposition-sharing-type-restricted-derivation-tree}
  Let $\derivationtree$ be a prefix of a restricted derivation tree, $\bag$ and $\bag'$ be two bags of $\derivationtree$ such that $\bag \stequiv \bag'$ and \emph{$\bag'$ is a leaf}. Let $\bag_c$ be a child of $\bag$.
 Then: \emph{(a)} the tree obtained from $\derivationtree$ by adding the copy $\bag'_c$ of $\bag_c$ under $\bag'$ is a prefix of a restricted derivation tree, and \emph{(b)} it holds that $\bag_c \stequiv \bag'_c$.
\end{proposition}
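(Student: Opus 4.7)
I would follow the replay construction used in the proof of Proposition~\ref{proposition-sharing-type-children-derivation-tree}: take the path in the chase graph of the derivation associated with $\derivationtree$ that leads from $\atoms{\bag}$ down to $\atoms{\bag_c}$, say through triggers $(\erule_1,\match_1),\dots,(\erule_k,\match_k)=(\erule,\match)$, and replay it below $\bag'$ by composing with $\varphi_{\atoms{\bag}\rightarrow\atoms{\bag'}}$ on terms coming from $\terms{\bag}$ and introducing fresh nulls for the generated ones. This produces a sequence $\hat\alpha_1,\dots,\hat\alpha_k$ whose last atom is the desired copy $\bag'_c$ attached below $\bag'$, together with $\bag_c \stequiv \bag'_c$; so claim~(b) and the tree-prefix aspect of claim~(a) come for free, and the only additional obligation is to check that every trigger $(\erule_i,\hat\match_i)$ of the replay is active according to the \emph{restricted} criterion.

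I would prove this by induction on $i$ and by contradiction: assume $\hat\match_i$ extends to a homomorphism $h$ from $\body{\erule_i} \cup \head{\erule_i}$ into the tree built so far, and construct from $h$ an extension $h^\circ$ of $\match_i$ into the instance present in $\derivationtree$ when $(\erule_i,\match_i)$ was originally fired, contradicting the fact that $\derivationtree$ was produced by a restricted derivation. Since rules are linear, $h(\head{\erule_i})$ is a single atom $\gamma$ in that instance, and the whole difficulty reduces to finding an appropriate $\gamma^\circ$ on the original side.

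For the transport of $h$ to $h^\circ$ I would distinguish the possible locations of $\gamma$. If $\gamma$ lies in the replayed chain itself or equals $\atoms{\bag'}$, then the inductive hypothesis together with $\bag \stequiv \bag'$ yield a natural counterpart $\gamma^\circ$ in the original chain or in $\atoms{\bag}$, and $h^\circ$ is defined by composing $h$ with $\varphi_{\atoms{\bag'}\rightarrow\atoms{\bag}}$ extended by the inverse of the fresh-null substitution. Otherwise $\gamma$ belongs to $\derivationtree$ and sits outside the replayed chain; the leaf hypothesis on $\bag'$ forces $\gamma$ to lie above $\bag'$ (no descendant exists), and because $\gamma$ contains the frontier image $\hat\match_i(\fr{\erule_i})$ which then cannot involve fresh nulls (confined to the replayed chain), this image must be included in $\shared{\bag'}$. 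A short argument using the construction of derivation trees (each new bag is attached to the highest ancestor containing the frontier image) then shows that an analogous folding is already available in $\derivationtree$ at the moment $(\erule_i,\match_i)$ fired, giving the desired $h^\circ$ via $\varphi_{\atoms{\bag'}\rightarrow\atoms{\bag}}$ on shared terms.

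The main obstacle is exactly this last situation --- a folding that uses atoms of $\derivationtree$ strictly outside the replayed chain. The leaf assumption on $\bag'$ is what makes it tractable: without it, as the example preceding the proposition shows, descendants of $\bag'$ could supply folding targets that have no symmetric counterpart under $\bag$, so the transport would simply fail and the statement itself would become false. A careful case analysis on where the terms of $\hat\match_i(\fr{\erule_i})$ sit (generated vs.\ shared in $\bag'$) is what I expect to be the technical core of the argument.
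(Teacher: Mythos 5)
Your overall strategy---replay a chain of triggers below $\bag'$ and transport any folding that would violate the restricted criterion back to the original side---is the right kind of argument, but the step you yourself flag as the technical core is exactly where it breaks. In your last case (the folding target $\gamma$ lies in $\derivationtree$ outside the replayed chain, so that $\hat\match_i(\fr{\erule_i})\subseteq\shared{\bag'}$), the proposed transport ``via $\varphi_{\atoms{\bag'}\rightarrow\atoms{\bag}}$ on shared terms'' produces nothing: that natural mapping is defined only on $\terms{\bag'}$, whereas $\gamma$ is an arbitrary atom sitting among the ancestors of $\bag'$, and $\bag\stequiv\bag'$ says nothing about the surroundings of the two bags --- there is no reason an atom with the same pattern over the corresponding terms of $\shared{\bag}$ exists anywhere near $\bag$. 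So no folding is obtained on the original side and no contradiction with the activeness of $(\erule_i,\match_i)$ follows. What saves the argument is that this case is vacuous, for a reason you do not invoke: since $\bag_c$ is a \emph{child} of $\bag$, the last frontier image $\match_k(\fr{\erule_k})$ contains some $t\in\generated{\bag}$; a term generated in $\bag$ is a fresh existential witness created with $\atoms{\bag}=\alpha_0$, hence it is never fresh in any later $\alpha_i$, and $t\in\terms{\alpha_i}$ forces $t\in\match_i(\fr{\erule_i})\subseteq\terms{\alpha_{i-1}}$; by downward induction $t$ belongs to \emph{every} frontier image along the chase-graph path. Consequently every replayed frontier image contains $\varphi_{\atoms{\bag}\rightarrow\atoms{\bag'}}(t)\in\generated{\bag'}$, and since a term generated in $\bag'$ occurs only in the subtree rooted at $\bag'$ --- which, $\bag'$ being a leaf, consists of $\atoms{\bag'}$ and the replayed atoms --- every folding target falls into your first, easy case. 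With that observation your induction closes.

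For comparison, the paper's proof sidesteps the issue by replaying not the chase-graph path to $\atoms{\bag_c}$ but the entire subsequence of the derivation producing the strict descendants of $\bag$, building (by induction on its length) a quasi-isomorphic copy of the whole subtree of $\bag$ under $\bag'$; each copied bag then shares a generated term with its parent inside that copy, so foldings are confined to the copied subtree and transport directly through the quasi-isomorphism to the state of the original subtree at the time the original trigger fired. Your leaner replay is viable and arguably more economical, but only once the confinement of foldings to $\{\atoms{\bag'}\}\cup\{\hat\alpha_1,\dots,\hat\alpha_{i-1}\}$ is established as above; as written, the decisive case rests on a transport that fails.
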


\begin{proof} 
Let $S$ be the restricted derivation associated with $\derivationtree$. Let $S_c$ be the subsequence of $S$ that starts from $\bag$ and produces the strict descendants of $\bag$. Obviously, any rule application in  $S_c$ is performed on a descendant of $\bag$, hence we do not care about rule applications that produce bags that are not descendants of $\bag$. 
We prove the property by induction on the length of $S_c$. If $S_c$ is empty, the property holds with $\derivationtree_e = \derivationtree$. Assume the property is true for  $0 \leq |S_c| \leq k$. Let $|S_c| = k + 1$. By induction hypothesis, there is an extension $\derivationtree' $ of  $\derivationtree$ such that the subtree of $\bag$ restricted to the first $k$ elements of $S_c$ is `quasi-isomorphic' to the subtree rooted in $\bag'$ (via a bijective substitution defined by the natural mappings between sharing types, say $\phi$) . Let $(\erule,\match)$ be the last trigger of $S_c$, and assume it applies to a bag $\bag_d$. In $\derivationtree'$, there is a bag $\bag'_d = \phi(\bag_d)$. Hence, $\erule$ can be applied to $\bag'_d$ with the homomorphism $\phi \circ \match$. Any folding of the produced bag $\bag''$ to a bag in $\derivationtree'$ necessarily maps $\bag''$ to a bag in the subtree rooted in $\bag'_d$ (because $\bag'_d$ and $\bag''$ share a term generated in $\bag'_d$, that only occurs in the subtree rooted in $\bag'_d$ and remains invariant by the folding). Since $\bag_d$ and $\bag'_d$ have quasi-isomorphic subtrees, and $(\erule,\match)$ satisfies the restricted chase criterion, so does $(\erule,\phi \circ \match)$.
Furthermore, the quasi-isomorphism $\phi$ preserves the sharing types. Hence, $\bag'_d$ is added exactly like the bag produced by   $(\erule,\match)$.  
 We conclude that the property holds true at rank $k+1$. 
\end{proof}

The previous proposition allows us to obtain a variant of Proposition \ref{proposition-finiteness-derivation-tree} adapted to the restricted chase:  

\begin{propositionrep}
\label{proposition-finiteness-restricted-derivation-tree}
 There exists an arbitrary large restricted derivation tree associated with $\atom$ and $\ruleset$ if and only if there exists a restricted derivation tree associated with $\atom$ and $\ruleset$
  that contains an unbounded-path witness. 
\end{propositionrep}

\begin{proof}
If there is no restricted derivation tree with a UPW, then the size of any restricted derivation tree is bounded since a restricted derivation tree is a derivation tree. 
We prove the other direction by contradiction. Assume that the size of restricted derivation trees is bounded whereas the forbidden pattern occurs in some of them. Consider a restricted chase sequence $S$ with associated restricted derivation tree $\derivationtree$ that contains a UPW $(\bag,\bag')$ of maximal depth among all such pairs  and all trees, and such that $\bag'$ is a leaf (we can do the latter assumption since the prefix of any restricted derivation is a restricted derivation). 

Let $\bag_c$ be the child of $\bag$ that is on the shortest path from $\bag$ to $\bag'$ (possibly $\bag_c = \bag'$). By Proposition \ref{proposition-sharing-type-restricted-derivation-tree}, there is a restricted derivation tree that extends $\derivationtree$ such that $\bag'$ has a child $\bag'_c$ of the same sharing type as $\bag_c$, hence $(\bag_c, \bag'_c)$ is a UPW of depth strictly greater than $(\bag,\bag')$, which contradicts the hypothesis. 
\end{proof}

It is less obvious than in the case of the semi-oblivious chase that the existence of an infinite derivation entails the existence of an infinite \emph{fair} derivation. However, this property still holds:

\begin{propositionrep} 
For linear rules, every (infinite) non-terminating restricted derivation is a subsequence of a fair restricted derivation.
\end{propositionrep}

\begin{proof}
 Let $\derivation$ be a non-terminating restricted derivation. In particular, there exists a least one infinite branch in the associated derivation tree. Let us consider the following derivation: when the node $\bag_k$ of depth $k$ on this branch has been generated, complete the corresponding subsequence by trying to apply (i.e., while respecting the restricted criterion) all currently applicable triggers that add a bag a depth at most $k-1$. These additional rule applications cannot prevent the creation of any bag that is below $\bag_k$ in the derivation tree. Indeed, let $\atom_c$ be an atom possibly created by a rule application, whose bag would be attached as a child of a bag $\bag$; since $\atom_c$ shares a variable with $\atoms{\bag}$ that is generated in $B$, which thus only occurs in the subtree of $B$, the only possibility for $\atom_c$ to fold into the current instance, is to be mapped to an atom in the subtree of $\bag$. 
 By construction, any possible rule application will be performed or inhibited at some point, which implies that the derivation that we build in this fashion is fair.
\end{proof}
 
Similarly to Proposition \ref{proposition-finiteness-derivation-tree} for the semi-oblivious chase, Proposition \ref{proposition-finiteness-restricted-derivation-tree} provides an algorithm to decide termination of the restricted chase. 
The difference is that it is not sufficient to build a single derivation for a given canonical instance; instead, all possible restricted derivations from this instance have to be built (note that the associated restricted derivation trees are finite for the same reasons as before, and there is obviously a finite number of them). Hence, we obtain: 

\begin{corollary}
\label{corollary-restricted-finiteness-decidability}
The all-sequence termination problem for the restricted chase on linear rules is decidable. 
\end{corollary}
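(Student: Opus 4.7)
The plan is to mimic the algorithm sketched for the semi-oblivious case (Corollary \ref{corollary-semi-oblivious-finiteness-decidability}), but accounting for the essential non-determinism of the restricted chase: because termination of some fair sequence does not entail termination of all fair sequences, it is no longer sufficient to explore a single derivation per instance. First I would invoke Proposition \ref{prop-atomic-instance} to reduce the problem to a finite family of canonical (singleton) instances, one per atom type; it suffices to decide, for each such canonical instance $\{\atom\}$, whether every fair restricted $\ruleset$-derivation from $\{\atom\}$ is finite, and to return ``yes'' iff this holds for each of them.

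For a fixed canonical instance $\{\atom\}$, my procedure enumerates all restricted derivation trees associated with $\atom$ and $\ruleset$, generating them step by step (each step corresponds to firing an active trigger according to the restricted criterion, and attaching the new bag at the highest ancestor whose terms contain the image of the frontier). Two stopping rules apply to each branch of the enumeration: we stop extending a particular tree as soon as (i) no active trigger remains, or (ii) a UPW appears in the tree. Stopping rule (ii) is justified by Proposition \ref{proposition-finiteness-restricted-derivation-tree}, which says that the presence of a UPW in \emph{some} restricted derivation tree already witnesses the existence of arbitrarily large restricted derivation trees, hence of an infinite restricted derivation. Combined with the fairness proposition stated just before the corollary, such an infinite restricted derivation can be completed into an infinite \emph{fair} restricted derivation, so the answer for $\{\atom\}$ is ``no''.

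Finiteness of this enumeration is where the two bounds from Section 3 come in. The arity of every derivation tree is bounded (Proposition \ref{prop-bounded-arity}), and in the absence of a UPW the depth of a derivation tree is bounded by the number of sharing types, which itself is bounded in terms of the vocabulary and the maximal arity. Hence the set of restricted derivation trees reached by the enumeration is finite, and at each branching point only finitely many active triggers need to be tried, so the whole enumeration terminates. If no branch ever produces a UPW, then every restricted derivation from $\{\atom\}$ is finite; every fair restricted sequence from $\{\atom\}$ is therefore finite as well, and the answer for $\{\atom\}$ is ``yes''.

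The main subtlety, and the only step that differs substantially from the semi-oblivious argument, is justifying that the enumeration really covers ``every fair sequence'': we must ensure that if some fair restricted sequence is infinite, then some restricted derivation tree built by the procedure exhibits a UPW. This is exactly what Proposition \ref{proposition-finiteness-restricted-derivation-tree} provides in the ``only if'' direction, since the restricted derivation tree associated with a long enough prefix of the infinite sequence is itself a restricted derivation tree of unbounded size and must contain a UPW. Combining the finitely many per-instance decisions over the finitely many canonical instances yields the desired decision procedure.
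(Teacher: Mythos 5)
Your proposal is correct and matches the paper's own argument: both reduce to canonical singleton instances, enumerate all restricted derivation trees (rather than a single one, as the restricted chase is order-sensitive), use Proposition \ref{proposition-finiteness-restricted-derivation-tree} together with the bounded arity and sharing-type depth bound to guarantee termination of the enumeration, and invoke the fairness proposition to turn an infinite restricted derivation into an infinite fair one. No substantive difference from the paper's proof.
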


A rough analysis of the proposed algorithm provides a \textsc{co-N2ExpTime} upper-bound for the complexity of the problem, by guessing a derivation that is of length at most double exponential, and checking whether there is a UPW in the corresponding derivation tree.

\begin{figure}[t]

\begin{center}

\tikzset{every picture/.style={line width=0.75pt}} 

\begin{tikzpicture}[x=0.75pt,y=0.75pt,yscale=-1,xscale=1,thick,scale=0.6, every node/.style={scale=0.7}]

	\draw    (426, 38) circle [x radius= 55, y radius= 25]  ;

	\draw (426,38) node   {$p(x,y)$};

\draw (376,67) node   {$1$};

\draw    (426,63) -- (346,85) ;

\draw    (346, 110) circle [x radius= 55, y radius= 25]  ;

\draw (346,110) node   {$q(y)$};

		\draw (476,67) node   {$3$};

		\draw    (426,63) -- (506,85) ;

		\draw    (506, 110) circle [x radius= 55, y radius= 25]  ;

		\draw (506,110) node   {$r(y,x)$};

\draw (336,148) node   {$2$};

\draw    (346,135) -- (346,157) ;

\draw    (346, 182) circle [x radius= 55, y radius= 25]  ;

\draw (346,182) node   {$r(y,z_0)$};

\draw (336,220) node   {$4$};

\draw    (346,207) -- (346,229) ;

\draw    (346, 254) circle [x radius= 55, y radius= 25]  ;

\draw (346,254) node   {$p(z_0,z_1)$};

\draw (288,282) node   {$5$};

\draw [dotted]   (346,279) -- (266,299) ;

\draw (256,320) node   {$6$};

\draw [dotted]   (266,299) -- (266,329) ;

\draw (402,282) node   {$7$};

\draw [dotted]   (346,279) -- (426,299) ;

	\draw    (126, 38) circle [x radius= 55, y radius= 25]  ;

	\draw (126,38) node   {$p(x,y)$};

\draw (76,67) node   {$1$};

\draw    (126,63) -- (46,85) ;

\draw    (46, 110) circle [x radius= 55, y radius= 25]  ;

\draw (46,110) node   {$q(y)$};

		\draw (176,67) node   {$2$};

		\draw    (126,63) -- (206,85) ;

		\draw    (206, 110) circle [x radius= 55, y radius= 25]  ;

		\draw (206,110) node   {$r(y,x)$};

\end{tikzpicture}

\end{center}

\caption{Finite versus Infinite Derivation Tree for Example \ref{ex-bfs-stop}}

\label{figure-infinite-bf-derivation}

\end{figure}
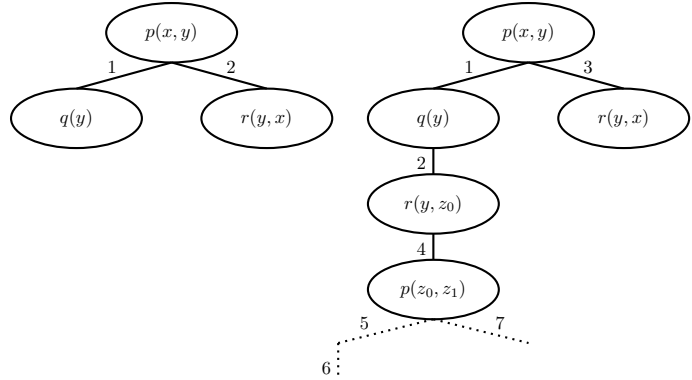

Importantly, the previous algorithm is naturally able to consider solely some type of restricted derivations, i.e., build only derivation trees associated with such derivations, which is of theoretical but also of practical interest. Indeed, implementations of the restricted chase often proceed by building \emph{breadth-first} sequences (which are intrinsically fair), or variants of these. As witnessed by the next example, the termination of all breadth-first sequences is a strictly weaker requirement than the termination of all fair sequences, in the sense that the restricted chase terminates on more sets of rules.  

\begin{example} \label{ex-bfs-stop}Consider the following set of rules:\\
$\erule_1 = p(x,y) \rightarrow q(y) \quad \quad \quad ~\erule_2 = p(x,y) \rightarrow r(y,x)$\\
$ \erule_3 = q(y) \rightarrow \exists z ~r(y,z) \quad \quad \erule_4 = r(x,y) \rightarrow \exists z ~p(y,z)$\\
All breadth-first restricted derivations terminate, whatever the initial instance is. Remark that every application of $\erule_1$ is followed by an application of $\erule_2$ in the same breadth-first step, which prevents the application of $\erule_3$. However, there is a fair restricted derivation that does not terminate (and this is even true for any instance). Indeed, an application of $\erule_2$ can always be delayed, so that it comes too late to prevent the application of $\erule_3$.  
See Figure \ref{figure-infinite-bf-derivation}: on the left, a finite derivation tree associated with a breadth-first derivation from instance $p(x,y)$; on the right, an infinite derivation tree associated with a (non breadth-first) fair infinite derivation from the same instance. The numbers on edges give the order in which bags are created.  

\end{example}

We now prove the decidability of the one-sequence termination problem, building on the same objects as before, but in a different way. 
Indeed, a (restricted) derivation tree $\derivationtree$ that contains a UPW $(B,B')$ is a witness of the existence of an infinite (restricted fair) derivation, but does not prove that \emph{every} (restricted fair) derivation that extends $\derivationtree$ is infinite. To decide, we will consider trees associated with a \emph{sharing type} instead of a type. A derivation tree associated with a sharing type $T$ has a root bag whose sharing type is $T$, and is built as for usual root bags, except that shared terms are taken into account, i.e., triggers $(\erule, \match)$ such that $\match(\fr{\erule}) \subseteq \shared{T}$ are simply ignored. The algorithm proceeds as follows:

\begin{enumerate}
\item For each sharing type $T$, generate all restricted derivations trees associated with $T$, stopping the construction of a tree when, for each leaf $B_L$, either there is no active trigger on $\atoms{B_L}$ or $B_L$ forms a UPW with one of its ancestors.    
\item Mark all the sharing types that have at least one associated tree without UPW. 
\item Propagate the marks until stability: if a sharing type $T$ has at least one tree for which all UPWs $(B,B')$ are such that the sharing type of $B$ is marked, then mark $T$. 
\item If all sharing types that correspond to instances (i.e., without shared terms) are marked, return \emph{yes}, otherwise return \emph{no}. 
\end{enumerate}

\begin{proposition}  The previous algorithm terminates and returns yes if and only if there is a terminating restricted sequence. 
\end{proposition}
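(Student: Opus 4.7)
The plan is to prove termination of the algorithm and the two implications separately.

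Termination of the algorithm follows from three observations. First, the set of sharing types over the (finite) vocabulary is finite. Second, for each sharing type $T$, step~1 generates only finitely many trees: every branch without a UPW contains bags with pairwise distinct sharing types, so the depth of each generated tree is bounded by the number of sharing types, and the arity is bounded by Proposition~\ref{prop-bounded-arity}. Third, the propagation in step~3 is monotone over this finite set, hence reaches a fixpoint in a bounded number of iterations.

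For the direction marked $\Rightarrow$ terminating sequence exists, I would proceed by induction on the round in which a sharing type $T$ gets marked. In the base case (step~2), a witnessing tree $\derivationtree$ has no UPW, so all its leaves are saturated and $\derivationtree$ is itself a terminating restricted derivation for $T$. In the inductive case (step~3), the witnessing tree $\derivationtree$ has every UPW $(B, B')$ with $T(B)$ marked in a prior round, so by the induction hypothesis each $T(B)$ admits a terminating tree $\derivationtree_B$. Under every UPW leaf $B'$ in $\derivationtree$, I would graft a fresh copy of $\derivationtree_B$ (identifying the root of the copy with $B'$ via the sharing-type isomorphism between $B$ and $B'$, and using fresh nulls for generated terms). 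Iterating Proposition~\ref{proposition-sharing-type-restricted-derivation-tree} shows the result is a valid restricted derivation tree; saturation is preserved at every leaf since adding atoms to the instance only makes foldings easier, hence triggers less active. The combined tree is therefore a terminating restricted derivation for $T$.

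The converse direction proceeds by strong induction on the size of a minimum-size terminating restricted derivation tree $\derivationtree$ for $T$ (in the algorithm's view). The starting observation is that such a minimum tree cannot contain a UPW whose upper bag is the root: if it did, the subtree rooted at the lower bag would have sharing type $T$ as well, and would form a strictly smaller terminating tree for $T$, contradicting minimality. So every UPW $(B, B')$ in $\derivationtree$ has $B$ a strict descendant of the root, and the subtree $\derivationtree_B$ is strictly smaller than $\derivationtree$. The main technical obstacle is the lemma that $\derivationtree_B$ is itself a terminating tree for $T(B)$ in the algorithm's view: any non-ignored trigger on a leaf $L$ of $\derivationtree_B$ must involve a frontier term not in $\shared{T(B)}$, hence generated somewhere within $B$'s subtree, so any folding of the atom it would produce necessarily maps into $\derivationtree_B$; since the trigger was inactive in the full tree it remains inactive in $\derivationtree_B$. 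Applying the induction hypothesis to each $T(B)$ then gives that all UPWs in $\derivationtree$ have marked upper bags, and since $\derivationtree$ has all leaves saturated it qualifies as a tree generated in step~1, so step~3 marks $T$. Since the only instance sharing types (those with no shared terms) correspond to canonical instances, combining this with Proposition~\ref{prop-atomic-instance} yields the stated equivalence.
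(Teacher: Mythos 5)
Your proof is correct in outline and is in fact more complete than the paper's own proof, which is only a sketch covering one direction. For the direction ``terminating sequence $\Rightarrow$ the algorithm answers yes,'' the paper argues by contradiction: it picks an \emph{unmarked} UPW $(\bag,\bag')$ with $\bag'$ of maximal depth in the derivation tree of a terminating derivation, observes that by fairness the subtree of $\bag'$ must contain (as a prefix) one of the trees the algorithm built for the sharing type of $\bag'$, and derives a deeper unmarked UPW. You instead run a strong induction on the size of a minimal terminating tree, using minimality to exclude UPWs anchored at the root and the induction hypothesis on the strictly smaller subtrees $\derivationtree_{\bag}$. Both arguments ultimately rest on the same locality property of generated terms (a folding of an atom whose frontier image contains a term generated inside a subtree must stay inside that subtree), which you state explicitly as your ``main technical obstacle'' lemma and the paper leaves implicit; your minimal-witness induction and the paper's maximal-counterexample argument are interchangeable here. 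For the converse direction (``yes $\Rightarrow$ terminating sequence''), the paper's sketch is silent, and your induction on the marking round together with grafting copies of terminating trees under UPW leaves is the natural way to fill that hole; invoking Proposition~\ref{proposition-sharing-type-restricted-derivation-tree} iteratively is justified by its proof (which copies whole subderivations), though strictly its statement only covers a single child of a bag in the \emph{same} tree, so a careful write-up would restate it for grafting across trees. Two small points to tighten: (i) a minimal terminating tree $\derivationtree$ with internal UPWs is not literally ``a tree generated in step~1,'' since step~1 cuts each branch at the first UPW leaf --- what you need is that the step-1 tree following the same trigger order is a prefix of $\derivationtree$ whose UPWs are among those of $\derivationtree$; (ii) after grafting, you should also account for the still-active triggers on a UPW leaf whose frontier maps entirely into its shared terms (these attach bags to ancestors rather than below the graft), which your ``adding atoms only deactivates triggers'' remark does not by itself dispose of. Neither point is addressed in the paper's sketch either.
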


\begin{proof}(Sketch)
Termination follows from the finiteness of the set of sharing types and the bound on the size of a tree. Concerning the correctness of the algorithm,
we show that a terminating restricted derivation cannot have a derivation tree that contains an unmarked UPW, i.e., whose associated sharing type is not marked. 
By contradiction: assume there is a terminating restricted derivation whose derivation tree contains an unmarked UPW; consider such an unmarked UPW $(B,B')$ such that $B'$ is of maximal depth in the tree. The subtree of $B'$ necessarily admits as prefix one of the restricted derivation trees associated with the sharing type of $B'$ built by the algorithm, otherwise the derivation would not be fair. Moreover, since the sharing type of $B'$ is not marked, this prefix contains an unmarked UPW. Hence, the tree contains an unmarked UPW $(B'',B''')$ with $B'''$ of depth strictly greater than the depth of $B'$, which contradicts the hypothesis. 
\end{proof}

\begin{corollary}
\label{corollary-one-sequence-finiteness-decidability}
The one-sequence termination problem for the restricted chase on linear rules is decidable. 
\end{corollary}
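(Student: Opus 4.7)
The plan is to combine Proposition \ref{prop-atomic-instance}, which reduces all-instance termination on linear rules to termination on canonical singleton instances, with the algorithm described in the proof of the preceding proposition. Since the vocabulary and maximal arity are fixed, there are only finitely many canonical instances, and each corresponds uniquely to a sharing type whose shared-term set is empty. Step~4 of the algorithm checks exactly those sharing types, so the decision procedure for the one-sequence problem simply runs the four-step algorithm and returns its verdict; by Proposition \ref{prop-atomic-instance}, this answers the all-instance version of the problem.

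The second item on the plan is to verify the algorithm is genuinely a decision procedure, i.e., that it always halts. Three finiteness arguments suffice: sharing types are finitely many (a pair consisting of an atom type and a subset of argument positions); each restricted derivation tree enumerated in Step~1 is stopped at every branch as soon as no active trigger remains or a UPW is formed, so by Proposition \ref{prop-bounded-arity} together with a depth bound equal to the number of sharing types, each such tree is of bounded size; and the mark-propagation in Step~3 is a monotone fixpoint on a finite set. Correctness in both directions is then delivered by the preceding proposition.

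The only genuinely non-routine ingredient is the soundness direction of that proposition, which I would re-derive by contradiction, following the sketch already given: assume a terminating restricted derivation from some canonical instance whose tree would refute the algorithm, and pick an unmarked UPW $(B,B')$ with $B'$ of maximal depth; by fairness, the subtree rooted at $B'$ must contain as prefix one of the restricted trees associated with the sharing type of $B'$ enumerated in Step~1; since that sharing type is unmarked, this prefix itself contains an unmarked UPW, yielding a strictly deeper unmarked UPW in the whole tree and contradicting maximality. The conceptually delicate part is therefore not the step from the preceding proposition to the corollary, but justifying that sharing-type-rooted restricted trees correctly represent every subtree that can arise below a UPW inside a larger fair derivation; once that is settled, the reduction via Proposition \ref{prop-atomic-instance} is mechanical and yields decidability.
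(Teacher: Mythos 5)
Your proposal is correct and follows essentially the same route as the paper: the corollary is obtained directly from the preceding proposition asserting that the four-step marking algorithm terminates and answers yes iff some terminating restricted sequence exists, with the reduction to finitely many canonical instances via Proposition~\ref{prop-atomic-instance} and the finiteness of sharing-type-rooted trees supplying decidability. Your re-derivation of the soundness direction (maximal-depth unmarked UPW, fairness forcing a prefix tree associated with the sharing type of $B'$, hence a strictly deeper unmarked UPW) matches the paper's own proof sketch, including the identification of the genuinely delicate step.
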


By guessing a terminating restricted derivation, which must be of size at most double exponential, and checking that the obtained instance is indeed a universal model, we obtain a \textsc{N2ExpTime} upper bound for the complexity of the one-sequence termination problem.

We conclude this section by noting that the previous Example \ref{ex-bfs-stop} may give the (wrong) intuition that, given a set of rules, it is sufficient to consider breadth-first sequences to decide if there exists a terminating sequence. 
The following example shows that it is not the case: here, no breadth-first sequence is terminating, while there exists a terminating sequence for the given instance.

\begin{example} \label{ex-bfs-non-stop-bis} 
Let $\ruleset=\{\erule_1,\erule_2,\erule_3\}$ with
$\erule_1 = p(x,y) \rightarrow  \exists z ~p(y,z) $, $\erule_2 = p(x,y) \rightarrow h(y)$, and $\erule_3= h(x) \rightarrow ~p(x,x)$.
In this case, for every instance, there is a terminating restricted chase sequence, where the application of $\erule_2$ and $\erule_3$ prevents the indefinite application of $\erule_1$. However, starting from $\instance = \{p(a,b)\}$, by applying rules in a breadth-first fashion 
one obtains a non-terminating restricted chase sequence, since
 $\erule_1$ and $\erule_2$ are always applied in parallel from the same atom, before applying $ \erule_3$.
\end{example}

As for the all-sequence termination problem, the algorithm may restrict the derivations of interest to specific kinds.

\section{Core Chase Termination}
We now consider the termination of the core chase of linear rules. Keeping the same approach, we prove that the finiteness of the core chase is equivalent to  the existence of a finite tree of bags whose set of atoms is a minimal universal model. We call this a \emph{(finite) complete core}. To bound the size of a complete core, we show that it cannot contain an unbounded-path witness. Note that in the binary case, it would be possible to work again on derivation trees, but this is not true anymore for arbitrary arity. Indeed, as shown in Example \ref{example-mlm}, there are linear sets of rules for which no derivation tree form a complete core (while it holds for binary rules). We thus introduce a more general tree structure, namely \emph{entailment trees}.

\begin{example}
 \label{example-mlm}
 Let us consider the following rules:
 \smallskip
 
 \begin{tabular}{ll}
   $s(x) \rightarrow \exists y \exists z ~p(y,z,x)$ & \hspace{1cm}  $p(y,z,x) \rightarrow \exists v ~q(y,v,x)$ \\
   $q(y,v,x) \rightarrow p(y,v,x)$ & ~ \\
 \end{tabular}

 \smallskip

Let $\instance = \{s(a)\}$. The first rule applications yield a derivation tree $\derivationtree$ which is a path of bags $B_0, B_1,B_2,B_3$ respectively labeled by the following atoms:\\
$s(a), p(y_0,z_0,a), q(y_0,v_0,a)$ and $p(y_0, v_0,a)$. $\derivationtree$ is represented on the left of Figure \ref{figure-example-mlm}. Let $A$ be this set of atoms. 
First, note that $ A$ is not a core: indeed it is equivalent to its strict subset $A'$ defined by $\{B_0, B_2, B_3\}$ with a homomorphism $\match$ that maps $\atoms{B_1}$ to 
$\atoms{B_3}$. Trivially, $A'$ is a core since it does not contain two atoms with the same predicate.  
Second, note that any further rule application on  $\derivationtree$ is redundant, i.e., generates a set of atoms equivalent to $A$ (and  $A'$).
Hence, $A'$ is a complete core, however there is no derivation tree that corresponds to it. There is even no \emph{prefix} of a derivation tree that corresponds to it (which ruins the alternative idea of building a prefix of a derivation  tree that would be associated with a complete core).  
In particular, note that $\{B_0, B_1, B_2\}$ is indeed a core, but it is not complete.

\end{example}

\begin{figure}[t]
\begin{center}
\tikzset{every picture/.style={line width=0.75pt}} 

\begin{tikzpicture}[x=0.75pt,y=0.75pt,yscale=-1,xscale=1,thick,scale=0.6, every node/.style={scale=0.7}]

\draw    (126, 38) circle [x radius= 35, y radius= 20]  ;
\draw    (126.25, 104.5) circle [x radius= 55.25, y radius= 24.5]  ;
\draw    (126.25, 175.5) circle [x radius= 55.25, y radius= 24.5]  ;
\draw    (126.25, 246.5) circle [x radius= 55.25, y radius= 24.5]  ;
\draw    (126,58) -- (126,80) ;

\draw    (126,129) -- (126,151) ;

\draw    (126,200) -- (126,222) ;

\draw    (68.1,229.09) .. controls (6.16,198.84) and (14.56,141.8) .. (70,122) ;

\draw [shift={(70,230)}, rotate = 205.11] [color={rgb, 255:red, 0; green, 0; blue, 0 } ][line width=0.75]    (10.93,-3.29) .. controls (6.95,-1.4) and (3.31,-0.3) .. (0,0) .. controls (3.31,0.3) and (6.95,1.4) .. (10.93,3.29)   ;
\draw    (375, 76) circle [x radius= 35, y radius= 20]  ;
\draw    (375.25, 143.5) circle [x radius= 55.25, y radius= 24.5]  ;
\draw    (375, 214.5) circle [x radius= 55.25, y radius= 24.5]  ;
\draw    (375,96) -- (375,118) ;

\draw    (375,168) -- (375,190) ;

\draw (126,37) node   {$s( a)$};
\draw (124,105) node   {$p( y_{0} ,z_{0} ,a)$};
\draw (125,176) node   {$q( y_{0} ,v_{0} ,a)$};
\draw (126,247) node   {$p( y_{0} ,v_{0} ,a)$};
\draw (76,38) node   {$B_{0}$};
\draw (56,105) node   {$B_{1}$};
\draw (54,175) node   {$B_{2}$};
\draw (57,250) node   {$B_{3}$};
\draw (117,300) node  [align=left] {Derivation Tree};
\draw (364,299) node  [align=left] {An Entailment Tree};
\draw (15,170) node   {$\varphi $};
\draw (372,75) node   {$s( a)$};
\draw (374,144) node   {$q( y_{0} ,v_{0} ,a)$};
\draw (375,215) node   {$p( y_{0} ,v_{0} ,a)$};
\draw (322,76) node   {$B_{0}$};
\draw (303,143) node   {$B_{2}$};
\draw (306,218) node   {$B_{3}$};

\end{tikzpicture}
\end{center}
\caption{Derivation tree and entailment tree for Example \ref{example-mlm}}
\label{figure-example-mlm}
\end{figure}
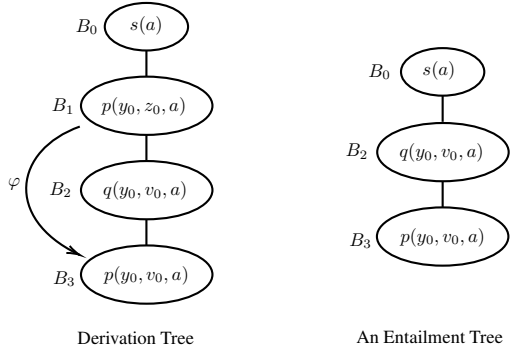

In the following definition of entailment tree, we use the notation $\atom_1 \rightarrow \atom_2$, where $\atom_i$ is an atom, to denote the rule $\forall X (\atom_1 \rightarrow \exists Y ~\atom_2)$ 
with $X = \vars{\atom_1}$ and $Y = \vars{\atom_2}\setminus X$.

\begin{definition}[Entailment Tree]
\label{definition-entailment-tree}
 An \emph{entailment tree} associated with $\atom$ and $\ruleset$ is a tree of bags $\entailmenttree$ such that:
  \begin{enumerate}
  \item  $\bag_r$, the root of $\entailmenttree$, is such that $\ruleset \models \atom \rightarrow \atoms{\bag_r}$ and $\ruleset \models \atoms{\bag_r} \rightarrow \atom$; 
  \item \sloppypar{For any bag $\bag_c$ child of a node $\bag$, the following holds: \emph{(i)} $\terms{\bag_c}\cap \generated{\bag} \neq \emptyset$
  \emph{(ii)} The terms in $\generated{\bag_c}$ are variables that do not occur outside the subtree of $\entailmenttree$ rooted in $\bag_c$  \emph{(iii)} $\ruleset \models \atoms{\bag}  \rightarrow \atoms{\bag_c}$.}
 \item there is no pair of twins.
 \end{enumerate}
\end{definition}

Note that $\atom$ is not necessarily the root of the entailment tree, as it may not belong to the result of the core chase on $\atom$ (hence Point 1). 

First note that an entailment tree is independent from any derivation. The main difference with a derivation tree is that it employs a more general parent-child relationship, that relies on entailment rather than on rule application, hence the name entailment tree. Intuitively, with respect to a derivation tree, one is allowed to move a bag $B$ higher in the tree, provided that it contains at least one term generated in its new parent $B_p$; then, the terms of $B$ that are not shared with $B_p$ are freshly renamed. 
Finally, since the problem of whether an atom is entailed by a linear existential rule knowledge base is decidable (precisely \textsc{PSpace}-complete \cite{DBLP:books/sp/virgilio09/CaliGL09}), one can actually generate all non-twin children of a bag and keep a tree with bounded arity.

Derivation trees are entailment trees, but not necessarily conversely. 
A crucial distinction between these two structures is the following statement, which does not hold for derivation trees, as illustrated by Example \ref{example-mlm}. 

\begin{propositionrep}
\label{proposition-entailment-core}
 If the core chase associated with $\atom$ and $\ruleset$ is finite, then there exists an entailment tree $\entailmenttree$ such that the set of atoms associated with $\entailmenttree$ is a complete core.
\end{propositionrep}

\medskip
\noindent \emph{Example \ref{example-mlm} (cont'd).} The tree defined by the path of bags $B_0$, $B_2$, $B_3$ is an entailment tree, represented on the right of Figure \ref{figure-example-mlm}, which defines a complete core.

\begin{proof}
Let $\derivationtree$ be the derivation tree associated with a derivation containing a core $C$ of $\chase{\atom}{\ruleset}$. Let $\varphi$ be an idempotent homomorphism from the atoms of $\derivationtree$ to $C$. We assign to each bag $\bag$ of $\derivationtree$ a set of trees $\{T_1,\ldots,T_{n_B}\}$ such that:
\begin{enumerate}
 \item each tree contains only elements of $C$;
 \item the forest assigned to $\bag$ contains exactly once the elements of $C$ appearing in the subtree rooted in $\bag$;
 \item for each pair $(\bag_p,\bag_c)$ of bags in some $T_i$ such that $\bag_p$ is a parent of $\bag_c$, $\ruleset \models \atoms{\bag_p} \rightarrow \atoms{\bag_c}$;
 \item each $T_i$ is a decomposition tree;
 \item for each $T_i$, the root of $T_i$ contains all the terms that belong both to $T_i$ and to $C \setminus T_i$;
 \item each term $\term$ belonging to distinct $T_i$ and $T_j$ of the forest assigned with a bag $\bag$ also belongs to the parent of $\bag$.	
\end{enumerate}

Moreover, we will show that if $\varphi(\bag)$ is a descendant of $\bag$ (including $\bag$) in $\derivationtree$, then its associated forest is a tree.

\begin{itemize}
\item if $\bag$ is a leaf, we consider two cases:
 \begin{itemize}
   \item $\bag$ belongs to the core: we assign it a single tree, containing only a root being itself. All conditions are trivial.
   \item $\bag$ does not belong to the core: we assign it an empty forest, and all conditions are trivial.
 \end{itemize}
\item if $\bag$ is an internal node, let $\{T_1,\ldots,T_n\}$ be the union of the forests assigned to the children of $\bag$. We distinguish three cases:
  \begin{itemize}
   \item $\bag$ is in the core: we assign to $\bag$ the tree $T$ containing $\bag$ as root, and having as children the roots of $\{T_1,\ldots,T_n\}$.
    \begin{itemize}
     \item 1. 2.: holds by induction assumption, the fact that different $T_i$'s cover disjoint subtrees of $\derivationtree$, and the fact that $\bag$ belongs to the core
     \item 3.: it is enough to check this for the pairs (root of $T$, root of $T_i$). The root of $T$ is an ancestor of root of $T_i$ in $\derivationtree$, hence $\ruleset \models \atoms{\mathrm{root}(T)}\rightarrow \atoms{\bag_i}$, where $\bag_i$ is the root of $T_i$
     \item 4. if $\term$ appears in $T$ but in no $T_i$, it appears only in $\bag$ and the connectivity of the substructure containing $\term$ holds. If it belongs to some $T_i$ and to $C \setminus T_i$, it must belong to the root of $T_i$ by assumption 6.. If $\term$ belongs to $C \setminus T$, it belongs to $\bag$ by connectivity of $\derivationtree$. If $\term$ belongs to another $T_j$, we distinguish two cases: $T_j$ is in the same forest as $T_i$, and then by induction assumption 7. on the child of $\bag$ to which this forest is associated, $\term$ belongs to $\bag$. Or $T_j$ is in the forest of another child of $\bag$, and then by connectivity property for $\term$, it belongs to $\bag$. Hence the connectivity property for $\term$ in $T$ is fulfilled.
     \item 5. By connectivity of $\derivationtree$, as $\bag$ is the root of $T$
     \item 6. true as there is only one tree
    \end{itemize}
   \item $\varphi(\bag) \not = \bag$ but is a descendant of $\bag$. By induction assumption 2., there exists exactly one tree among the trees associated with children of $\bag$ containing $\varphi(\bag)$. Let assume w.l.o.g that it is $T_1$, of root $\bag_1$. We build the following tree $T$: for all $T_i \not = T_1$, we add to $\bag_1$ a subtree by putting the root of $T_i$ under $\bag_1$.
   \begin{itemize}
   \item 1. No added elements, hence by induction assumption 1.
   \item 2. No added elements, hence by induction assumption 2.
   \item 3. To check for pairs ($\bag_1$,$\bag_i$), where $\bag_i$ is the root of $T_i$. $\ruleset \models \atoms{\bag_1} \rightarrow \atoms{\varphi(\bag)}$, as $\varphi(\bag)$ is a descendant of $\bag_1$ in $T_1$. Moreover, $\ruleset \models \atoms{\varphi(\bag)} \rightarrow \atoms{\bag_i}$, as $\varphi(\bag)$ is more specific than $\bag$, and $\varphi$ is the identity on shared terms. 
   \item 4. for all term $\term$ appearing in a single tree, the connectivity property holds by induction assumption 4.. Let $\term$ appearing in two trees. $\term$ appears in the roots of both tree by $6.$, and must appear in $\bag$ by connectivity of $\derivationtree$, hence in $\varphi(B)$, and hence in $\bag_1$ (by 6.). As $\bag_1$ and the roots of both trees are neighbor, this proves the result.
   \item 5. let $\term$ belonging to $T$ and to $C \setminus T$. By connectivity of $\derivationtree$, $\term$ belongs to $\bag$, hence to $\varphi(\bag)$ (because $\varphi(t) =t$). As $\term$ belongs both to $T_1$ and to $C \setminus T_1$, $\term$ belongs to $\bag_1$, and hence to the root of the assigned tree.
   \item 6. true as there is only one tree.
   \end{itemize}
   \item $\varphi(\bag)$ is not a descendant of $\bag$. We assign to $\bag$ the union of the forests associated to its children.
   \begin{itemize}
   \item 1.-5 By induction assumption
   \item 6. let $\term$ belonging to two trees $T_1$ and $T_2$. If $T_1$ and $T_2$ come from forest associated to two different children, $\term$ belongs to $\bag$ by connectivity of $\derivationtree$. If $T_1$ and $T_2$ come from the same forest, $\term$ belongs to $\bag$ by induction assumption 7. Then $\term$ belongs to $\bag$. As $\term$ is in $C$, $\term$ belongs to $\varphi(\bag)$. By connectivity of $\derivationtree$, it belongs to the parent of $\bag$, because that parent is on the path from $\bag$ to $\varphi(\bag)$, which proves 6.
   \end{itemize}
  \end{itemize}
\end{itemize}
Finally, we check that the following property is satisfied: for any bag $\bag$, if $\bag$ is in the core, then a single tree with root $\bag$ is assigned to it. If $\atom$ is in the core, we have built such a tree. It remains to obtain an entailment tree: for that, we have to bring up nodes at the highest level with respect to shared terms. We may also have to say something about 'generated' if it still appear in the definition of an entailment tree.
\end{proof}

\medskip
Differently from the semi-oblivious case, we cannot conclude that the chase does not terminate as soon as a UPW is built, because the associated atoms may later be mapped to other atoms, which would remove the UPW.  Instead, starting from the initial bag, we recursively add bags that do not generate a UPW
(for instance, we can recursively add all such non-twin children to a leaf). Once the process terminates (the non-twin condition and the absence of UPW ensure that it does), we check that the obtained set of atoms $C$ is complete (i.e., is a model of the KB): for that, it suffices to perform each possible rule application on $C$ and check if the resulting set of atoms is equivalent to $C$.  See Algorithm \ref{algorithm-core-chase}. The set $C$ may not be a core, but it is complete iff it contains a complete core.

We now focus on the key properties of entailment trees associated with complete cores. We first introduce the notion of \emph{redundant bags}, which captures some cases of bags that cannot appear in a finite core. As witnessed by Example \ref{example-mlm}, this is not a characterization: $B_1$ is not redundant (according to next Definition \ref{definition-redundancy}), but cannot belong to a complete core.

\begin{definition}[Redundancy]
\label{definition-redundancy}
Given an entailment tree, a bag $\bag_c$ child of $\bag$ is redundant if there exists an atom $\beta$ (that may not belong to the tree) with \emph{(i)} $\ruleset \models \atoms{\bag} \rightarrow \beta$; \emph{(ii)} there is a homomorphism from $\atoms{\bag_c}$ to $\beta$ that is the identity on $\shared{\bag_c}$ \emph{(iii)} $|\terms{\beta} \setminus \terms{\bag}| < |\terms{\bag_c} \setminus \terms{\bag}|$.
\end{definition}

Note that $\bag_c$ may be redundant even if the ``cause'' for redundancy, i.e., $\beta$, is not in the tree yet. 

The role of this notion in the proofs is as follows: we show that if a complete entailment  tree contains a UPW then it contains a redundant bag, and that a complete core cannot contain a redundant bag, hence a UPW. To prove this, we rely on next Proposition \ref{proposition-swissknife-bag-copy}, which is the counterpart for entailment trees of Proposition \ref{proposition-sharing-type-children-derivation-tree}: performing a bag copy from an entailment tree results in an entailment tree (the notion of prefix is not needed, since a prefix of an entailment tree is an entailment tree) and keeps the properties of the copied bag.  
\begin{propositionrep} 
\label{proposition-swissknife-bag-copy}
Let $\bag$ be a bag of an entailment tree $\entailmenttree$, $\bag'$ be a bag of an entailment tree $\entailmenttree'$ such that $\bag \stequiv \bag'$. Let $\bag_c$ be a child of $\bag$ and $\bag_c'$ be a copy of $\bag_c$ under $\bag'$. Let $\entailmenttree''$ be the extension of $\entailmenttree'$ where $\bag_c'$ is added as a child of $\bag'$. Then \emph{(i)} $\entailmenttree''$ is an entailment tree; \emph{(ii)} $\bag_c$ and $\bag_c'$ have the same sharing type;
\emph{(iii)} $\bag_c'$ is redundant if and only if $\bag_c$ is redundant.
\end{propositionrep}

In light of this, the copy of a bag can be naturally extended to the copy of the whole subtree rooted in a bag, which is crucial element in the proof of next Proposition~\ref{proposition-uc-excluded-main-text}:

\begin{toappendix}
Another important property of entailment trees (which is also satisfied by derivation trees) is that its structure provides information on where a bag may be mapped by $\varphi$ if its parent is left invariant by $\varphi$. 
\begin{lemma}
\label{lemma-locality}
 Let $\entailmenttree$ be an entailment tree. Let $\varphi$ be a homomorphism from the atoms of $\entailmenttree$ to themselves. Let $\bag_p$ such that $\varphi_{\mid \terms{\bag_p}}$ is the identity. Let $\bag_c$ be a child of $\bag_p$. Then $\varphi(\bag_c)$ is in the subtree rooted in $\varphi(\bag_p) = \bag_p$.
\end{lemma}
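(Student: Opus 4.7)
The plan is to exploit the two connectivity conditions (i) and (ii) in the definition of entailment trees (Definition \ref{definition-entailment-tree}), which together guarantee that each newly generated term remains confined to its own subtree. The key observation is that even though $\varphi$ may act non-trivially on the generated terms of $\bag_c$, it cannot displace the \emph{shared} term guaranteed to exist between $\bag_c$ and $\bag_p$, and this single fixed term will pin $\varphi(\bag_c)$ inside the subtree rooted in $\bag_p$.

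First I would treat the easy case: if $\bag_p$ is the root of $\entailmenttree$, then the subtree rooted in $\bag_p$ is the whole tree, so the conclusion is trivial since $\varphi$ maps into the atoms of $\entailmenttree$.

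Now assume $\bag_p$ is not the root. By condition (i) of Definition \ref{definition-entailment-tree} applied to the child $\bag_c$ of $\bag_p$, there exists a term $\term \in \terms{\bag_c} \cap \generated{\bag_p}$. In particular $\term \in \terms{\bag_p}$, so $\varphi(\term) = \term$ by assumption. Since $\term$ occurs in $\atoms{\bag_c}$, it also occurs in $\varphi(\atoms{\bag_c})$. Next, since $\bag_p$ is itself a child of some node in $\entailmenttree$, condition (ii) of Definition \ref{definition-entailment-tree} applied to $\bag_p$ guarantees that the term $\term \in \generated{\bag_p}$ does not occur in any atom outside the subtree rooted in $\bag_p$. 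Consequently, the atom $\varphi(\atoms{\bag_c})$, which contains $\term$, must label a bag belonging to the subtree rooted in $\bag_p$, which is exactly the conclusion.

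I do not expect a serious obstacle here: the proof is really just a structural unpacking of the two locality conditions built into the definition of entailment tree. The only small point worth stating carefully is to distinguish the root case (where (ii) does not apply to $\bag_p$) from the non-root case, and to make sure the shared-generated term is extracted from $\bag_p$ (not from some ancestor) so that (ii) restricts its occurrences to the right subtree.
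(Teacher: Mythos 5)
Your proof is correct and follows essentially the same argument as the paper's: extract a term of $\bag_c$ generated in $\bag_p$ via condition (i), note it is fixed by $\varphi$ and hence occurs in $\varphi(\atoms{\bag_c})$, and conclude by the locality of generated terms that $\varphi(\atoms{\bag_c})$ lies in the subtree rooted at $\bag_p$. The explicit separation of the root case and the care about which condition of the definition restricts the occurrences of the generated term are welcome refinements of the paper's terser version.
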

\begin{proof}
 $\bag_c$ is a child of $\bag_p$ thus there exists at least one term generated in $\bag_p$ that is a term of $\bag_c$. As $\varphi$ is the identity on $\bag_p$, that term belongs as well to $\varphi(\atoms{\bag_c})$. Thus $\varphi(\atoms{\bag_c})$ should also be in a bag that is in the subtree rooted in $\bag_p$.
\end{proof}
\end{toappendix}

\begin{proposition}
\label{proposition-uc-excluded-main-text}
 A complete core cannot contain \emph{(i)} a redundant bag \emph{(ii)} an unbounded-path witness.
\end{proposition}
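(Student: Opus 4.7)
The plan is to prove (i) and (ii) separately, with (ii) reducing to (i) by constructing a redundant bag from any UPW.

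For (i), I would argue by contradiction, building a non-injective endomorphism of $C$. Suppose $\bag_c$ is a redundant child of $\bag$ in $C$ with witness atom $\beta$. Since $\ruleset \models \atoms{\bag} \rightarrow \beta$ and $C$ is complete, there is a folding $h' : \beta \to C$ that is the identity on $\terms{\bag} \cap \terms{\beta}$; composing with the given folding $\atoms{\bag_c} \to \beta$ yields $h'' : \atoms{\bag_c} \to C$ identity on $\shared{\bag_c}$. Chaining the parent--child entailments along the tree gives that the whole conjunction of atoms in the subtree rooted at $\bag_c$ is entailed from $\atoms{\bag_c}$, with all generated terms below $\bag_c$ as existentials; completeness then extends $h''$ to a homomorphism $\hat h$ defined on every term of that subtree. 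I define $\psi$ on $\terms{C}$ by setting $\psi = \hat h$ on every generated term inside the subtree of $\bag_c$ (which does not appear outside by the entailment tree property) and $\psi$ the identity elsewhere. This $\psi$ is an endomorphism of $C$, and the strict inequality $|\terms{\beta} \setminus \terms{\bag}| < |\generated{\bag_c}|$ forces $\hat h$ either to merge two generated terms of $\bag_c$, or to send at least one generated term to an element of $\terms{\bag}$ (which $\psi$ fixes); in both cases $\psi$ is not injective, contradicting that $C$ is a core.

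For (ii), the plan is to show that any UPW in $C$ produces a redundant bag, whereupon (i) supplies the contradiction. Let $(\bag, \bag')$ be a UPW with $\bag$ a strict ancestor of $\bag'$, let $\varphi$ be the natural mapping $\atoms{\bag} \to \atoms{\bag'}$, and let $\bag_c$ be the child of $\bag$ on the path to $\bag'$. Using Proposition~\ref{proposition-swissknife-bag-copy} I form the hypothetical copy $\bag_c^*$ of $\bag_c$ under $\bag'$, whose atom is entailed from $\atoms{\bag'}$ by replaying the rule chain that produced $\bag_c$ from $\bag$. By completeness of $C$, this copy folds into some $\gamma \in C$ via a homomorphism $h$ identity on $\varphi(\shared{\bag_c})$. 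I then transport $\gamma$ back to the $\bag$-side by applying $\varphi^{-1}$ on its coordinates in $\terms{\bag'}$ and freshly renaming the remaining coordinates, producing a candidate witness $\beta$ satisfying $\ruleset \models \atoms{\bag} \rightarrow \beta$ and admitting a folding $\atoms{\bag_c} \to \beta$ identity on $\shared{\bag_c}$ (namely $\varphi^{-1} \circ h \circ \varphi^s$).

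The main obstacle is establishing the \emph{strict} inequality $|\terms{\beta} \setminus \terms{\bag}| < |\generated{\bag_c}|$ required for redundancy: the direct count yields only $\leq$, with equality precisely when $h$ restricts to a bijection from the fresh generated terms of $\bag_c^*$ onto $\terms{\gamma} \setminus \terms{\bag'}$. I plan to rule out this degenerate case by combining the no-twin clause of Definition~\ref{definition-entailment-tree} with iterated copying via Proposition~\ref{proposition-swissknife-bag-copy}: in the purely bijective regime each copy would embed as $|\generated{\bag_c}|$ genuinely new terms inside the finite set $\terms{C}$, so after finitely many iterations some copy must fold non-bijectively (otherwise one eventually obtains two children of a common parent with the same sharing type and identical natural mapping, i.e.\ twins, which entailment trees forbid). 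Picking the first iteration at which strict reduction occurs pinpoints a genuinely redundant bag on the path, and invoking (i) completes the proof.
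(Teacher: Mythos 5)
Your part (i) is correct and is essentially the paper's own argument: use the redundancy witness $\beta$ and completeness to build an endomorphism that is the identity outside the subtree of $\bag_c$, with the strict inequality in Definition~\ref{definition-redundancy} forcing non-injectivity. The genuine gap is in part (ii), exactly at the point you flag as ``the main obstacle''. Your reduction produces a redundant bag only when the folding $h$ of the copied bag $\bag_c^*$ collapses its generated terms; when $h$ is injective on them and avoids $\terms{\bag'}$, no witness satisfying condition \emph{(iii)} of Definition~\ref{definition-redundancy} appears, and the iteration sketch does not close this case. First, the fold lands on terms that already exist in $C$, so the finiteness of $\terms{C}$ does not force a later iteration to collapse anything: distinct iterations may reuse the same target terms. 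Second, the target atom $\gamma$ labels a bag that is in general a \emph{strict descendant} of a child of $\bag'$, not a child of $\bag'$ itself, so its sharing type relative to its actual parent is unconstrained and the twin-freeness clause of Definition~\ref{definition-entailment-tree} is never triggered. The paper itself signals (via Example~\ref{example-mlm}) that redundancy does not characterize the bags excluded from complete cores, and its proof of (ii) does not reduce to exhibiting a single redundant bag.

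What the paper does instead: it chooses the UPW $(\bag,\bag')$ with $\bag'$ of \emph{maximal depth}, which rules out that the image of $\bag_c^*$ is a safe renaming sitting as a child of $\bag'$ (that would be a deeper UPW); the redundancy argument (your part (i)) rules out the image being $\bag'$ itself or a specialization at a child; the only remaining case is that the image is a strict descendant of some child $\bag_r'$ of $\bag'$. It then copies the \emph{entire subtree} rooted at $\bag_c$ under $\bag'$ and proves by induction on depth that this whole copy folds into the subtree of $\bag_r'$, with generated terms mapped to generated terms. The contradiction is then global rather than local: replacing the subtree of $\bag_c$ by a copy of the strictly smaller subtree of $\bag_r'$ yields an equivalent, strictly smaller tree, so $\entailmenttree$ is not a core. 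To complete your part (ii) you would need this whole-subtree folding argument (or a genuine proof that the bijective regime cannot persist), not merely the single-bag redundancy reduction.
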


\begin{toappendix}
\begin{proposition}
\label{proposition-strong-redundancy}
 A complete core cannot contain a redundant bag.
\end{proposition}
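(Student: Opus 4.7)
My plan is to derive a contradiction by exhibiting a non-injective endomorphism of the alleged complete core. Assume that an entailment tree $\entailmenttree$ whose set of atoms is a complete core $C$ contains a redundant bag $\bag_c$, child of $\bag$, with witness $\beta$ and homomorphism $\mu : \atoms{\bag_c} \to \beta$ that is the identity on $\shared{\bag_c}$ and satisfies $|\terms{\beta}\setminus\terms{\bag}| < |\generated{\bag_c}|$. Since $\atoms{\bag}\in C$, $C\models\ruleset$, and $\ruleset\models\atoms{\bag}\to\beta$, there exists a homomorphism $h : \beta \to C$ that is the identity on $\terms{\beta}\cap\terms{\bag}$; in particular $h(\beta)\in C$.

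Using $h$ and $\mu$, I build an endomorphism $\rho$ of $C$ as follows. Define $\rho$ on terms by letting it act as the identity on every term that is not generated within the subtree of $\bag_c$, by setting $\rho = h\circ\mu$ on $\generated{\bag_c}$ (so that $\rho(\atoms{\bag_c}) = h(\beta) \in C$), and by extending $\rho$ top-down to each proper descendant $D$ of $\bag_c$: given that $\rho(\atoms{\mathrm{parent}(D)})\in C$, the entailment $\ruleset\models \atoms{\mathrm{parent}(D)} \to \atoms{D}$ together with $C\models\ruleset$ guarantees an assignment of $\generated{D}$ into $\terms{C}$ making $\rho(\atoms{D})\in C$. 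Crucially, by the entailment-tree property, the terms generated in each bag of the subtree of $\bag_c$ appear only inside that subtree, so these local choices do not interfere, and $\rho$ is a well-defined endomorphism of $C$.

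The heart of the argument is then to show that $\rho$ is not injective on $\terms{C}$, which contradicts the hypothesis that $C$ is a core. Since atoms agree position-wise, $\mu$ is surjective from $\terms{\bag_c}$ onto $\terms{\beta}$ and identity on $\shared{\bag_c}$. I split on whether $\mu$ is injective on $\terms{\bag_c}$. If it is not, then either two terms of $\generated{\bag_c}$ collapse under $\mu$ (hence under $\rho = h\circ\mu$), or some $t\in\generated{\bag_c}$ satisfies $\mu(t) = s \in \shared{\bag_c}$, yielding $\rho(t) = h(s) = s = \rho(s)$ with $t \neq s$. If instead $\mu$ is a bijection $\terms{\bag_c}\to\terms{\beta}$, then $|\generated{\bag_c}| = |\terms{\beta}\setminus\shared{\bag_c}| = |(\terms{\beta}\cap\terms{\bag})\setminus\shared{\bag_c}| + |\terms{\beta}\setminus\terms{\bag}|$; combined with the strict inequality $|\terms{\beta}\setminus\terms{\bag}| < |\generated{\bag_c}|$, this forces $|(\terms{\beta}\cap\terms{\bag})\setminus\shared{\bag_c}| > 0$. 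Picking $b$ in this set and $t\in\generated{\bag_c}$ with $\mu(t)=b$, we get $\rho(t) = h(b) = b = \rho(b)$ while $t\in\generated{\bag_c}$ and $b\in\terms{\bag}$ force $t \neq b$.

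The main obstacle I expect is the second case (injective $\mu$), in which the strict cardinality gap given by condition (iii) must be leveraged indirectly: it does not produce a direct collision among generated terms of $\bag_c$, but rather forces $\beta$ to re-use some ``old'' term of $\bag$ that is not in the shared interface $\shared{\bag_c}$, and this old term collides under $\rho$ with a generated term of $\bag_c$. Controlling this argument precisely requires careful bookkeeping of the partitions of $\terms{\bag_c}$ and $\terms{\beta}$ induced by the sets $\terms{\bag}$ and $\shared{\bag_c}$.
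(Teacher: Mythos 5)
Your proof is correct and follows the same overall strategy as the paper's: use the redundancy witness $\beta$ to produce a non-injective endomorphism of the alleged core. The execution differs in two minor but worthwhile ways. First, the paper proceeds in two steps: it extends $\entailmenttree$ to a larger entailment tree $\entailmenttree'$ containing $\beta$ and fresh images of the descendants of the redundant bag, exhibits a non-injective homomorphism $\entailmenttree\to\entailmenttree'$, and then invokes completeness to fold $\entailmenttree'$ back into $\entailmenttree$ and composes; you collapse this into a single step by using the fact that the complete core is a model of $\ruleset$, so that $\beta$ and each descendant bag can be mapped directly into $C$ via the entailments $\ruleset\models\atoms{\bag}\to\beta$ and $\ruleset\models\atoms{\mathrm{parent}(D)}\to\atoms{D}$, the locality of generated terms guaranteeing consistency. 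Second, and more substantively, the paper disposes of non-injectivity with the one-line claim that ``$\varphi$ is not injective on the terms of $\hat{\bag}$,'' which is not immediate: when the witness homomorphism $\mu$ is injective on $\terms{\bag_c}$, the cardinality condition $|\terms{\beta}\setminus\terms{\bag}|<|\generated{\bag_c}|$ only forces $\beta$ to reuse a term of $\terms{\bag}\setminus\shared{\bag_c}$, so the collision is between a generated term of $\bag_c$ and a term of its parent rather than between two terms of $\bag_c$. Your explicit case split (non-injective $\mu$ versus bijective $\mu$, with the counting argument in the latter case) supplies exactly the bookkeeping that the paper's sketch omits, so your write-up is, if anything, more complete on this point.
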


\begin{proof}
 Let $\entailmenttree$ be a complete entailment tree, and let $\hat{\bag}$ be a bag that is redundant. 
 We prove that there exists a non-injective endomorphism of $\entailmenttree$, showing that $\entailmenttree$ cannot be a core. For any entailment tree $\entailmenttree_p$ that is a prefix of $\entailmenttree$, we build $\entailmenttree'_{p}$ and a mapping $\varphi$ from the terms of $\entailmenttree_p$ to the terms of  $\entailmenttree'_{p}$ as follows:
 \begin{itemize}
  \item for any prefix of $\entailmenttree_p$ that does not contain $\hat{\bag}$, we define $\entailmenttree_{p}' = \entailmenttree_p$ and $\varphi$ the identity
  \item for the prefix that contains all nodes of $\entailmenttree$, including $\hat{\bag}$, except the descendants of $\hat{\bag}$, we define $\entailmenttree'_p$ as $\entailmenttree_p$ to which we add a leaf (if necessary) to the parent of $\hat{\bag}$ in $\entailmenttree$, which is the witness of the redundancy of $\hat{\bag}$. We define $\varphi$ as the identity on any term that is not generated in $\hat{\bag}$, and as its image by the $\varphi$ witnessing the redundancy pattern on terms generated in $\hat{\bag}$.
  \item if we have defined $\entailmenttree'_{p}$ for $\entailmenttree_p$, and we want to define $\varphi$ for $\entailmenttree'_n$ for $\entailmenttree_n$ which is $\entailmenttree_p$ to which a leaf $\bag_d$ has been added, we add where it belongs the bag $\varphi(\bag_d)$, where we extend $\varphi$ to term generated in $\bag_d$ by choosing fresh images. 
 \end{itemize}

 By construction, $\entailmenttree'$ is an entailment tree, and $\varphi$ is a homomorphism from $\entailmenttree$ to $\entailmenttree'$. Moreover, $\varphi$ is not injective: indeed, as $\hat{\bag}$ is redundant, $\varphi$ is not injective on the terms of $\hat{\bag}$. 
 
 As $\entailmenttree$ is complete, there exists a homomorphism from $\entailmenttree'$ to $\entailmenttree$. Hence the composition of the two homomorphisms is a homomorphism from $\entailmenttree$ to itself, which is not injective, as $\varphi$ is not. Hence $\entailmenttree$ is not a core.
\end{proof}
\end{toappendix}

\begin{toappendix}
\begin{proposition}
\label{proposition-uc-excluded} A complete core cannot contain any unbounded-path witness.
\end{proposition}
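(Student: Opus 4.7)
The plan is to establish the contrapositive: if a complete entailment tree $\entailmenttree$ contains an unbounded-path witness, then it contains a redundant bag, which by Proposition~\ref{proposition-strong-redundancy} contradicts $\entailmenttree$ being a core. Suppose, for contradiction, that $\entailmenttree$ is a complete core containing some UPW, and choose a UPW $(B,B')$, with $B$ an ancestor of $B'$, so that $B'$ has \emph{maximum} depth among the second components of all UPWs in $\entailmenttree$; such a choice exists because $\entailmenttree$ is finite. Let $B_c$ denote the child of $B$ lying on the path from $B$ to $B'$.

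The construction underlying the argument is to iteratively apply Proposition~\ref{proposition-swissknife-bag-copy} to graft under $B'$ a copy of the strict subtree of $B$: copy $B_c$ under $B'$ to get $B_c^{*}$, then copy each further descendant of $B$ under the copy of its parent. Iterated use of Proposition~\ref{proposition-swissknife-bag-copy}(ii) yields $X^{*}\stequiv X$ for every copied bag $X$, and in particular the copy $B^{*}$ of $B'$ satisfies $B^{*}\stequiv B'\stequiv B$. Let $\entailmenttree^{+}$ be the resulting tree. By Proposition~\ref{proposition-swissknife-bag-copy}(i) it is an entailment tree, and it is complete because every added atom is entailed from $\atoms{B'}\in\atoms{\entailmenttree}$. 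Since $\entailmenttree$ is a core and $\atoms{\entailmenttree^{+}}$ is equivalent to $\atoms{\entailmenttree}$ and contains it, there is a folding $\psi:\atoms{\entailmenttree^{+}}\to\atoms{\entailmenttree}$ that is the identity on $\atoms{\entailmenttree}$.

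I then close by cases on the behaviour of $\psi$. \emph{Twin case}: if the bag-copy construction is blocked at step one because $B_c^{*}$ would form a pair of twins with some existing child $B^{**}_{c}$ of $B'$ in $\entailmenttree$, then $B^{**}_{c}\stequiv B_c$ and $B^{**}_{c}$ is a strict descendant of $B_c$, so $(B_c,B^{**}_{c})$ is a UPW in $\entailmenttree$ whose second component has depth strictly greater than $B'$'s, contradicting the maximality of $(B,B')$. \emph{Collapsing case}: the construction is not blocked and $\psi$ either identifies two distinct terms of $\atoms{B_c^{*}}$, or maps some term of $\generated{B_c^{*}}$ into $\terms{B'}$. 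Then $\alpha:=\psi(\atoms{B_c^{*}})$ lies in the subtree of $B'$ in $\entailmenttree$ by Lemma~\ref{lemma-locality} (applied to $\psi$ viewed as an endomorphism of $\entailmenttree^{+}$, using that $\psi$ is the identity on $\terms{B'}$), and satisfies $\ruleset\models\atoms{B'}\to\alpha$, $|\terms{\alpha}\setminus\terms{B'}|<|\generated{B_c^{*}}|$, and admits $\psi$ itself as a homomorphism from $\atoms{B_c^{*}}$ to $\alpha$ identity on $\shared{B_c^{*}}\subseteq\terms{B'}$; hence $\alpha$ witnesses the redundancy of $B_c^{*}$ in $\entailmenttree^{+}$, and by Proposition~\ref{proposition-swissknife-bag-copy}(iii), of $B_c$ in $\entailmenttree$, contradicting Proposition~\ref{proposition-strong-redundancy}. \emph{Non-collapsing case}: $\psi$ acts injectively on every copy bag and moves no fresh term into a pre-existing one; iterated application of Lemma~\ref{lemma-locality} places $\psi(B^{*})$ strictly inside the subtree of $B'$ in $\entailmenttree$, and a careful position-by-position tracking of each term of $\atoms{B^{*}}$ through the iterated copy and then through $\psi$ shows that $\psi(B^{*})$ inherits exactly the sharing type of $B$; this yields a UPW $(B',\psi(B^{*}))$ whose second component has depth strictly greater than $B'$'s, once again contradicting maximality.

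The principal obstacle is the non-collapsing case: proving that the image of the deepest copy $B^{*}$ has the \emph{same sharing type} as $B$, and not merely the same atom type, requires a detailed provenance analysis to show that the parent of $\psi(B^{*})$ in $\entailmenttree$ contains precisely the terms that were propagated up the copy chain back to $\terms{B'}$, at exactly the positions dictated by $B$'s sharing type. Once this bookkeeping is in place, the three cases jointly deliver the desired contradiction.
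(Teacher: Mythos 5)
Your opening moves coincide with the paper's: pick a UPW $(B,B')$ with $B'$ of maximal depth, copy the subtree below $B$ (starting at $B_c$) under $B'$ via Proposition~\ref{proposition-swissknife-bag-copy}, use completeness to obtain a folding $\psi$ back onto $\entailmenttree$ that fixes $\terms{\entailmenttree}$, and dispose of the twin case and of the case where $\psi$ collapses $B_c^*$ by producing a deeper UPW or a redundant bag (Proposition~\ref{proposition-strong-redundancy}). The problem is that the argument is then left incomplete exactly where it is hardest, and your case split is not exhaustive. First, the ``non-collapsing'' endgame rests on the claim that $\psi(B^*)$ has the same \emph{sharing type} as $B$; you explicitly defer this as ``the principal obstacle'' and never establish it. Sharing type is determined by a bag's actual parent in $\entailmenttree$, and Lemma~\ref{lemma-locality} only tells you that $\psi(B^*)$ lies somewhere in the subtree rooted at $B'$ --- the bag it lands on may share terms with \emph{its} parent at positions unrelated to $B$'s sharing type, so no UPW $(B',\psi(B^*))$ need result. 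Second, your trichotomy omits the case where $\psi$ is well-behaved on $B_c^*$ but first collapses (or folds into already-present terms) at some deeper copy $X^*$. There the redundancy reduction breaks: the parent of $X^*$ is itself a copy with fresh generated terms, so $\psi$ is \emph{not} the identity on $\shared{X^*}$, and Definition~\ref{definition-redundancy} cannot be invoked for the original bag $X$ without further work.

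The paper resolves both difficulties by a two-part induction on the distance from $B_c'$: it shows that the image of every copied bag stays inside the subtree of one fixed child $B_r'$ of $B'$, \emph{and} that images of generated terms are generated terms of that subtree (each inductive step uses an auxiliary copy-and-fold construction to turn any violation into a non-injective endomorphism). Crucially, its endgame is not a deeper UPW at all: since the whole copied subtree folds into the subtree of $B_r'$, which is a \emph{strict} subtree of the subtree of $B_c$, one can replace the subtree of $B_c$ by a copy of the subtree of $B_r'$ and map $\entailmenttree$ homomorphically onto a strictly smaller tree, contradicting core-ness directly --- thereby never needing the sharing-type-preservation claim you could not supply. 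To repair your proof, you should either carry out the provenance analysis you postponed (and extend the collapsing case to arbitrary depth with the accompanying induction), or switch to the paper's ``strictly smaller equivalent tree'' conclusion, which sidesteps the obstacle.
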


\medskip
\begin{proof} 
We prove the result by contradiction. Let us assume that $\entailmenttree$ is a complete core containing an unbounded-path witness $(\bag,\bag')$. Let us choose $(\bag,\bag')$ such that $\bag'$ is of maximal depth with respect to its branch, that is, there is no unbounded-path witness $(\bag''',\bag'')$ with $\bag''$ a strict descendant of $\bag'$.

\medskip
Let $\bag_c$ be the child of $\bag$ on the path from $\bag$ to $\bag'$. 
Let us denote by $\entailmenttree_{\bag_c}$ the subtree of $\entailmenttree$ which is rooted at $\bag_c$ and by $\entailmenttree_{\bag_c'}$ a copy of $\entailmenttree_{\bag_c}$ under $B'$ whose root is $\bag_c'$. 
Then, let 
$\entailmenttree'$ be the extension of $\entailmenttree$ where $\entailmenttree_{\bag_c'}$ is added as a child of $\bag'$.
We want to show that there exists a  bag $\bag_r'$ child of $\bag'$ and a mapping  from $\entailmenttree_{\bag_c'}$ into $\entailmenttree_{\bag_r'}$, which is  the identity on the terms of $\entailmenttree$. More precisely, we want to show that for each $\bag_d'$ descendant of $ \bag_c'$ the following properties hold.
\begin{enumerate}
\item the image of $\bag_d'$ belongs to $\entailmenttree_{\bag_r'}$
\item the image of a term generated in $\bag_d'$ is a term generated in a bag of $\entailmenttree_{\bag_r'}$
\end{enumerate}
We do so by induction on $k$ the distance between $\bag_d'$ and $\bag_c'$ in $\entailmenttree$.
\begin{itemize}
\item If $k=0$ then $\bag_d'=\bag_c'$. 
 Because $\entailmenttree$ is a complete core, there exists a homomorphism from the atoms of $\mathcal{T'} $ to those of $ \mathcal{T}$ which is the identity on the terms of $\entailmenttree$. 

 We show that the image of $\bag'_c$ is a strict descendant of a child of $\bag'$. 
Note first that no child of $\bag'$ in $\entailmenttree$ can be a safe renaming of $\bag'_c$. Indeed, by Proposition \ref{proposition-swissknife-bag-copy}, $\bag_c$ and $\bag_c'$ have the same sharing type and therefore  $\bag_c'$ (as well as any safe renaming of its generated terms) cannot be a child of $\bag'$ because the couple $(\bag_c,\bag_c')$ would form 
an unbounded-path witness with $\bag_c'$ strictly deeper than $\bag'$. 
 Then, if $\bag_c'$ maps to $\bag'$ then the couple $(\bag',\bag_c')$ is  redundant 
and therefore also $(\bag,\bag_c)$ is redundant, by Proposition \ref{proposition-swissknife-bag-copy},  which in turn implies that $\entailmenttree$ is not a core,  because of Proposition \ref{proposition-strong-redundancy}.
Finally, if $\bag_c'$ maps to any child of $\bag'$ then it does so by specializing the sharing type of $\bag_c'$ (as we showed that no safe renaming of $\bag_c'$ can be a child of $\bag'$), which means that  $\bag_c'$ is redundant. Therefore, by Proposition \ref{proposition-swissknife-bag-copy}, $\bag_c$ is also redundant and hence, by Proposition \ref{proposition-strong-redundancy}, $\entailmenttree$ is not a core. This proves that the image of $\bag_c'$ is a strict descendant of some $\bag_r'$ child of $\bag'$.

\smallskip
Now, to prove the second point, let  $t$ be a term generated in $\bag'_c$ and $t'$ its image.
It is easy to see that for entailment trees any term that belongs to two bags in ancestor-descendant relationship also belongs to the bags on the shortest path between them.
Therefore, if $t'$ is generated by a strict ancestor of $\bag_r'$ then  $t'$ belongs to the terms of $\bag'$.  This means that starting from the sharing type of $\bag_c'$ one can build a strictly more specific sharing type where the position corresponding to the generated term $t$ becomes shared with $\bag'$. From this one can find a node $\bag_c''$  which is strictly more specific than $\bag_c'$ and that can be added as a child of $\bag'$. This means that $\bag_c'$ is redundant and by Proposition \ref{proposition-swissknife-bag-copy} also $\bag_c$ is redundant, so $\entailmenttree$ is not a core.

\item
Let us assume that both properties hold for all bags at distance $k$ from $\bag_c'$.
We want to show that they still holds for the bags at distance $k+1$.

Let $\bag_d'$ be a node at distance $k+1$ from $\bag_c'$ whose parent is $\bag'_\delta$.
By definition, $\bag_d'$  contains a term generated by $B'_\delta$ and, by induction, we know that the image of this term is generated in a bag of $\entailmenttree_{\bag_r'}$. Thus, it follows that the image of $\bag_d'$ belongs to  $\entailmenttree_{\bag_r'}$ as required by the first point.

 \smallskip
For the second point we reason by contradiction and show that when the property does not hold then 
$\entailmenttree$  admits a non-injective endomorphism 
and thus it is not a core.
We proceed with the following construction.
Let $\entailmenttree_{\bag_r}$ be a copy of $\entailmenttree_{\bag_r'}$ under $\bag$ and $\entailmenttree''$ the extension of $\entailmenttree$ where $\entailmenttree_{\bag_r}$ is added as a child of $\bag$. 
We know by induction that there exists a homomorphism from $\entailmenttree'$ to $\entailmenttree$ mapping all nodes at distance ${k+1}$ from $\bag'_c$ to the subtree rooted at $\bag'_r$.
From this, we can conclude that there exists a homomorphism from $\entailmenttree_{(k+1)}$ to $\entailmenttree''$, where  $\entailmenttree_{(k+1)}$ is the prefix of $\entailmenttree$ which includes all nodes of $\entailmenttree$ except for the descendants of $\bag_c$  that are at distance strictly greater than $k+1$ from it. 
Now, we further extend $\entailmenttree''$ by adding an image for all nodes which are at distance strictly greater than $k+1$ from  $\bag_c$  thereby obtaining a new entailment tree $\entailmenttree'''$. 
It follows that  $\entailmenttree$ can be mapped to $\entailmenttree'''$. 
Beside, since $\entailmenttree$ is complete there exists an homomorphism from $\entailmenttree'''$ to $\entailmenttree$.
So, by composing these two homomorphisms we get a homomorphism from $\entailmenttree$ to $\entailmenttree$.

We show that the homomorphism from $\entailmenttree$ to $\entailmenttree'''$ is non-injective.
Recall  that to construct $\entailmenttree'$ the whole subtree rooted at $\bag_c'$  has been copied from the subtree rooted at $\bag_c$. Let us denote by $\bag_d$ the node at distance $k+1$ from $\bag_c$ from which $\bag_d'$ has been copied under $\bag_\delta'$. 
Let  $t$ be a term generated at position $i$ in $\bag_d'$. If its image was generated by a strict ancestor of $\bag_r'$ then this would also belong to the terms of $\bag'$.
By Proposition \ref{proposition-swissknife-bag-copy}, $\bag_d$ and $\bag_d'$ have the same sharing types, hence  the mapping from $\entailmenttree_{(k+1)}$ to $\entailmenttree''$ (and thus that from $\entailmenttree $ to $\entailmenttree'''$) maps the generated term  at position $i$ of $\bag_d$, we call $s$, to a distinct term in $\bag$, we call $s'$. Moreover, the homomorphism is the identity on $s'$.
Therefore, the homomorphism from $\entailmenttree$ to $\entailmenttree'''$ is non-injective as both $s'$ and $s$ have the same image.

\end{itemize}
To finish the proof,  we proceed with the following construction.
Let $\entailmenttree^*$ be an entailment tree derived from $\entailmenttree$ where $i)$ the whole subtree rooted at $\bag_r'$ has been copied under $\bag$  and $ii)$ the subtree rooted at $\bag_c$ has been removed.
Note that  $\entailmenttree^*$ is of size strictly smaller than that of  $\entailmenttree$ because we added a bag for each descendant node of $\bag'_r$, which is a strict descendant of bag $\bag_c$, and that this last one has been removed. Now, because $\entailmenttree_{\bag_c'}$ maps to $\entailmenttree_{\bag_r'}$ it follows that $\entailmenttree_{\bag_c}$ maps to $\entailmenttree_{\bag_r}$ and by extending this homomorphism to the identity on all other terms we get that $\entailmenttree$ can be mapped to $\entailmenttree^*$. Hence, $\entailmenttree$ is not a core. 
\end{proof}
\end{toappendix}

\begin{corollary}
\label{corollary-core-finiteness-decidability}
The all-sequence termination problem for the core chase on linear rules is decidable. 
\end{corollary}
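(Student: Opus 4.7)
Following the same template as in Section~4, the plan is to combine the reduction to singleton instances (Proposition~\ref{prop-atomic-instance}) with a bounded construction of entailment trees that avoid the forbidden pattern. For each canonical instance $\{\atom\}$---finitely many, one per atom type on the finite vocabulary---the algorithm enumerates entailment trees $\entailmenttree$ rooted at an atom equivalent to $\atom$ modulo $\ruleset$ (Point~1 of Definition~\ref{definition-entailment-tree}), recursively adding non-twin children that satisfy the parent--child entailment condition and that do not create an unbounded-path witness with any ancestor. For each tree produced, it then tests whether $\treeatoms{\entailmenttree}$ is complete, i.e., whether every possible rule application on it folds back into $\treeatoms{\entailmenttree}$; the overall answer is \emph{yes} iff, for every canonical instance, at least one such complete tree is found. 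The construction is effective since each primitive---checking $\ruleset \models \atoms{\bag} \rightarrow \atoms{\bag_c}$ when inserting a child, and checking completeness by folding---reduces to entailment under linear existential rules, which is decidable. It terminates because entailment trees forbid twin pairs by definition (bounding the arity of each node by the number of sharing types times $|\ruleset|$) and because forbidding UPWs bounds the depth of any branch by the number of distinct sharing types, so only finitely many candidate trees are explored.

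Correctness follows from the two structural results of this section. If the core chase terminates on $\{\atom\}$, Proposition~\ref{proposition-entailment-core} supplies an entailment tree whose atoms form a complete core, and Proposition~\ref{proposition-uc-excluded-main-text} guarantees that this tree contains neither a UPW nor a redundant bag; consequently, the algorithm's bounded enumeration will encounter a tree equivalent to it (up to safe renaming of generated terms) and certify its completeness. Conversely, if the algorithm finds a complete entailment tree, then $\treeatoms{\entailmenttree}$ is a finite model of $(\{\atom\},\ruleset)$, and because the root is $\ruleset$-equivalent to $\atom$ and every edge satisfies the entailment condition of Point~2(iii), it is in fact a finite \emph{universal} model; the existence of a finite universal model is precisely the characterization of core-chase termination recalled in the preliminaries.

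The main obstacle is making the ``will encounter a tree equivalent to it'' step of the correctness argument rigorous. The algorithm must explore entailment trees modulo an equivalence that ignores the arbitrary naming of the fresh terms generated at each bag, and must also accommodate the freedom offered by Point~1 of Definition~\ref{definition-entailment-tree}, which lets the root atom be \emph{any} atom logically equivalent to $\atom$ under $\ruleset$ rather than $\atom$ itself. Handling this cleanly amounts to enumerating up to sharing-type equivalence both at the root and at each child insertion, which is legitimate thanks to Proposition~\ref{proposition-swissknife-bag-copy}: two bags of the same sharing type admit copies of each other's subtrees while preserving completeness and redundancy properties, so one can always normalize the entailment tree provided by Proposition~\ref{proposition-entailment-core} into one explored by the procedure.
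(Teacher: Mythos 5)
Your proposal is correct and follows essentially the same route as the paper: reduce to canonical singleton instances, build twin-free entailment trees while forbidding unbounded-path witnesses (which bounds arity and depth, hence guarantees termination), test completeness via decidable entailment for linear rules, and justify correctness by Propositions~\ref{proposition-entailment-core} and~\ref{proposition-uc-excluded-main-text}. The only cosmetic difference is that the paper's Algorithm~\ref{algorithm-core-chase} saturates a single UPW-free tree per canonical atom and checks whether the resulting (possibly non-core) atom set is complete, whereas you enumerate all such trees and accept if any one is complete; both variants are equivalent given the cited propositions.
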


\begin{algorithm}
\SetKwInOut{Input}{Input}\SetKwInOut{Output}{Output}
 \Input{A set of linear rules}
 \Output{\texttt{true} if and only if the core chase terminates on all instances}
 \For{each canonical atom $\atom$}
 {
 Let $\derivationtree$ be the entailment tree restricted to $\atom$;\\
 \While{a bag $\bag$ can be added to $\derivationtree$ respecting twin-free entailment tree condition and without creating a UPW}{
add $ \bag$ to $\derivationtree$
 } 
 \If{there is a rule $\erule$ applicable to $\treeatoms{\derivationtree}$ through $\match$ s.t. $\treeatoms{\derivationtree} \not \models \treeatoms{\derivationtree} \cup \matchsafe(\head{\erule})$}
   {\Return \texttt{false}}
 }
\Return \texttt{true}
 \caption{Deciding core chase termination}
 \label{algorithm-core-chase}
\end{algorithm}

A rough complexity analysis of this algorithm yields a \textsc{2ExpTime} upper bound for the termination problem. Indeed, the exponential number of (sharing) types yields a bound on the number of canonical instances to be checked, the arity of the tree, as well as the length of a path without UPW in the tree, and each edge can be generated with a call to a \textsc{PSpace} oracle.

\section{Concluding remarks} 

We have shown the decidability of chase termination over linear rules for three main chase variants (semi-oblivious, restricted, core) following a novel  approach based on derivation trees, and their generalization to entailment trees, and a single notion of forbidden pattern. As far as we know, these are the first decidability results for the restricted chase, on both versions of the termination problem (i.e., \emph{all sequence} and \emph{one sequence} termination). The simplicity of the structures and algorithms make them subject to implementation. 

We leave for future work the study of the precise complexity of the termination problems. A straightforward analysis of the complexity of the algorithms that decide the termination of the restricted and core chases yields upper bounds, however we believe that a finer analysis of the properties of sharing types would provide tighter upper bounds.  Future work also includes the extension of the results to more complex classes of existential rules: linear rules with a complex head, which is relevant for the termination of the restricted and core chases, and more expressive classes from the guarded family. Derivation trees were precisely defined to represent derivations with guarded rules and their extensions (i.e., greedy bounded treewidth sets), hence they seem to be a promising tool to study chase termination on that family.

 \bibliography{bib}
 \bibliographystyle{alpha} 

\end{document}